\documentclass[12pt]{article}

\usepackage[letterpaper,margin=1in]{geometry}
\usepackage{amsmath}
\usepackage{amssymb,amsthm}
\usepackage{enumitem}
\usepackage{needspace}
\usepackage{microtype}
\usepackage[authoryear,round]{natbib}
\usepackage{setspace}
\usepackage{xurl}
\usepackage{tikz}
\usetikzlibrary{positioning,arrows.meta}
\usepackage[colorlinks=true,allcolors=blue]{hyperref}

\allowdisplaybreaks
\renewcommand{\thesection}{\Roman{section}}

\makeatletter
\renewcommand{\@seccntformat}[1]{\csname the#1\endcsname.\quad}
\renewcommand\section{\@startsection{section}{1}{\z@}{-3.5ex \@plus -1ex \@minus -.2ex}{2.3ex \@plus.2ex}{\normalfont\Large\bfseries\raggedright}}
\renewcommand\subsection{\@startsection{subsection}{2}{\z@}{-3.25ex\@plus -1ex \@minus -.2ex}{1.5ex \@plus .2ex}{\normalfont\large\bfseries\raggedright}}
\let\old@appendix\appendix
\renewcommand{\appendix}{\old@appendix\renewcommand{\@seccntformat}[1]{Appendix~\csname the##1\endcsname.\quad}}
\newcommand{\supplementsetup}{%
  \setcounter{section}{0}\renewcommand{\thesection}{S\arabic{section}}\def\theHsection{S\arabic{section}}%
  \renewcommand{\@seccntformat}[1]{\csname the##1\endcsname.\quad}%
  \setcounter{theorem}{0}\renewcommand{\thetheorem}{S\arabic{theorem}}\def\theHtheorem{S\arabic{theorem}}%
  \setcounter{lemma}{0}\renewcommand{\thelemma}{S\arabic{lemma}}\def\theHlemma{S\arabic{lemma}}%
  \setcounter{proposition}{0}\renewcommand{\theproposition}{S\arabic{proposition}}\def\theHproposition{S\arabic{proposition}}%
  \setcounter{definition}{0}\renewcommand{\thedefinition}{S\arabic{definition}}\def\theHdefinition{S\arabic{definition}}%
  \setcounter{example}{0}\renewcommand{\theexample}{S\arabic{example}}\def\theHexample{S\arabic{example}}%
  \setcounter{remark}{0}\renewcommand{\theremark}{S\arabic{remark}}\def\theHremark{S\arabic{remark}}%
  \setcounter{equation}{0}\renewcommand{\theequation}{S\arabic{equation}}\def\theHequation{S\arabic{equation}}%
}
\makeatother
\newenvironment{figurenotes}{\par\vspace{4pt}\noindent\begin{minipage}{0.96\textwidth}\footnotesize\setstretch{1.1}\noindent }{\end{minipage}\par}

\newtheorem{theorem}{Theorem}
\newtheorem{lemma}{Lemma}
\newtheorem{proposition}{Proposition}

\theoremstyle{definition}
\newtheorem{definition}{Definition}
\newtheorem{example}{Example}
\theoremstyle{remark}
\newtheorem{remark}{Remark}

\newcommand{\Rev}{\operatorname{Rev}}
\newcommand{\Obs}{O}

\title{Protecting Losers Undermines Auction Credibility}
\author{Yuta Nakamura and Noriaki Okamoto\thanks{Nakamura: School of Economics and Business Administration, Yokohama City University, 22-2 Seto, Kanazawaku, Yokohama 236-0027, Japan (email: y\_naka@yokohama-cu.ac.jp); Okamoto: Department of Economics, Meiji Gakuin University, 1-2-37 Shirokanedai, Minatoku, Tokyo 108-8636, Japan (email: nokamoto@eco.meijigakuin.ac.jp). This work was supported by JSPS KAKENHI Grant Numbers 24K16334 (Nakamura) and 24K16338 (Okamoto).}}
\date{October 2026}

\begin{document}
\maketitle

\begin{abstract}
\noindent A second-price auction must determine its price but need not learn how far losing values lie below it. We show that protecting this information from the seller conflicts with auction credibility. In deterministic protocols with private communication, lower-tail privacy forces the first informative question to reveal whether a bidder has the highest possible value. The seller can then manipulate the winner's payment without any bidder detecting the deviation, increasing revenue at some valuation profiles and never reducing it. A descending protocol conceals losing values below the price but permits manipulation; with mandatory sale, a credible ascending protocol reveals them.
\end{abstract}

\noindent\textit{JEL:} D44, D47, D82.\\
\textit{Keywords:} contextual privacy, credible mechanisms, information elicitation, auctions.

Can an auction protect losing bidders' private information when the seller cannot commit to follow its rules? A second-price auction awards the object to the highest-valued bidder and charges the second-highest value, without needing to learn lower losing values. Concealing these values can protect bidders in later auctions or negotiations \citep{RothkopfTeisbergKahn1990}. We show that under private communication, this protection gives the seller an unavoidable incentive to inflate the winner's payment.

We consider a single-object auction with values in a common finite set. The seller privately asks bidders questions, choosing each question and its recipient from earlier answers; bidders observe only their own messages, allocation, and payment. A deviation is undetectable if each bidder's observations are consistent with the rules for his own value and some values of his opponents. We use \emph{prior-free credibility}, defined by \citet[Section~5.2, Definition~20]{AkbarpourLi2020}, which rules out such deviations that raise revenue at some profile without reducing it at any other. This criterion requires neither a prior nor revenue optimality.

Figure~\ref{fig:timing} compares descending and ascending protocols implementing the second-price rule. The descending protocol asks whether values are at least a threshold and gradually lowers it. For example, suppose three bidders have values $(100,10,1)$ and bidder 1 wins in ties. The seller learns bidder 1's value of 100 before determining the price of 10, and can charge 100 without detection: a rival with value 100 explains the winner's observations, while losers receive unchanged messages and pay zero. With mandatory sale, an ascending protocol can be prior-free credible, but rising thresholds disclose low losing values: on the integer grid, bidder 3's acceptance at 1 and refusal at 2 reveal value 1, although values 1 and 2 give the same winner and price.

\begin{figure}[!t]
\centering\footnotesize
\begin{minipage}{0.96\textwidth}
\noindent\textbf{A. Descending: the winner's value is learned before the price}\\[3pt]
\begin{tikzpicture}[box/.style={draw,align=center,minimum height=14mm,text width=44mm,font=\footnotesize,inner sep=3pt},arr/.style={-{Stealth[length=2mm]}}]
\node[box] (a1) {Threshold 100\\Bidder 1 answers yes\\Seller learns $v_1=100$};
\node[box,right=5mm of a1] (a2) {Thresholds fall to 10\\Bidder 2 first answers yes\\Seller learns the price is 10};
\node[box,right=5mm of a2] (a3) {Winner 1 pays 10\\Bidder 3's values below 10\\remain indistinguishable};
\draw[arr] (a1) -- (a2); \draw[arr] (a2) -- (a3);
\end{tikzpicture}\\[3pt]
The seller can charge 100 without detection: a rival's value of 100 explains the winner's payment, and the losers' own messages and payments remain unchanged.\\[8pt]
\noindent\textbf{B. Ascending: low losing values are disclosed}\\[3pt]
\begin{tikzpicture}[box/.style={draw,align=center,minimum height=14mm,text width=44mm,font=\footnotesize,inner sep=3pt},arr/.style={-{Stealth[length=2mm]}}]
\node[box] (b1) {Threshold 2\\Bidder 3 answers no\\Seller learns $v_3=1$};
\node[box,right=5mm of b1] (b2) {Thresholds rise through 10\\Bidders 1 and 2 answer yes\\Both remain active};
\node[box,right=5mm of b2] (b3) {Threshold 11\\Bidder 2 answers no\\Bidder 1 wins and pays 10};
\draw[arr] (b1) -- (b2); \draw[arr] (b2) -- (b3);
\end{tikzpicture}\\[3pt]
If bidder 3's value were 2, his answer at 2 would be yes. The winner and payment would be unchanged, but the seller would observe a different answer.
\end{minipage}
\caption{The privacy and credibility trade-off under second pricing}
\begin{figurenotes}
Values belong to $\{0,\ldots,100\}$ and are $(100,10,1)$; ties are won by the bidder with the smallest index. Each question asks whether the recipient's value is at least the stated threshold. After acceptance at 1, a refusal at 2 identifies value 1. The rows show selected stages of the descending protocol in Section~\ref{sec:size} and the ascending protocol in Appendix~\ref{app:ascending}, which is prior-free credible under mandatory sale. All messages are private.
\end{figurenotes}
\label{fig:timing}
\end{figure}

The contextual privacy criterion of \citet{HauptHitzig2026} requires that, holding opponents' values fixed, two values yielding the same allocation and payments generate the same complete history of questions and answers. Even requiring this only for losing bidders makes the second-price rule impossible to implement (Proposition~\ref{prop:loser-contextual}). We instead impose \emph{lower-tail contextual privacy}: changes between losing values that are both strictly below the price must leave this history unchanged. Unlike a price-setting value, a lower losing value can stay fixed while other losers' values fall and reduce the same winner's payment. This weaker requirement permits distinguishing a price-setting value from a lower value even when the outcome is unchanged, but conceals distinctions among values below the price. The descending protocol satisfies it.

With at least three bidders and three possible values, no finite deterministic private-query protocol implementing the second-price rule satisfies both lower-tail contextual privacy and prior-free credibility (Theorem~\ref{thm:main}). The result permits arbitrary questions with finite answer sets, including repeated queries. Every protocol satisfying lower-tail contextual privacy admits an undetectable seller deviation that never lowers revenue and raises it at some profile.\footnote{Privacy is imposed on the prescribed protocol. Credibility tests deviations subject to individual undetectability; it does not additionally require those deviations to preserve privacy.} The same deviation therefore raises expected revenue under every distribution assigning positive probability to every valuation profile.

Privacy determines the first informative question. If two opponents had the highest possible value, any lower value of the queried bidder would yield the same outcome. Privacy therefore requires identical answers across all these lower values, so the seller cannot distinguish among them. To be informative, the question must therefore separate the highest value from all lower values. Upon learning that value, the seller can compute the bidder's subsequent truthful answers herself while querying opponents. If he wins, she sends him messages consistent with the highest payment he could owe under the rules, while losers retain their actual messages and outcomes. This never reduces revenue and raises it at some profiles; private observations prevent bidders from detecting the incompatible explanations.

Manipulation can be substantial. On fine, equally spaced grids, every privacy-preserving protocol allows an undetectable price increase close to the full range of values at some profile. In the descending protocol with independent uniform values, the expected increase converges to a positive amount as the grid becomes finer (Proposition~\ref{prop:grids}). On integer grids, credibility rules out protecting any three fixed bidders whenever they lose below the price (Theorem~\ref{thm:cost}). Thus, all but at most two bidders must each face a privacy violation somewhere, possibly at different profiles. Section~\ref{sa:general} of the Supplemental Appendix extends the impossibility to other efficient, incentive-compatible payment rules whose outcomes do not depend on lower-tail changes, on grids with at least four values.

Our paper connects two concerns highlighted by \citet{RothkopfTeisbergKahn1990}: auctioneer cheating and the future use of bidders' information. \citet{HauptHitzig2026} compare protocols by their contextual privacy violations and characterize maximally private protocols; maximality need not eliminate all unnecessary disclosure. We identify a weaker, feasible protection for losers that conflicts with credibility. \citet{MilgromSegal2020} study winner privacy in clock auctions. Under mandatory sale, our ascending construction is prior-free credible and protects the winner but reveals lower losing values. \citet{GaoTang2026} distinguish privacy from the auctioneer from privacy among bidders; ours concerns information learned by the seller.

\citet{AkbarpourLi2020} show that, in a credible auction, a bidder who can win without acting again must know his payment conditional on winning. Otherwise, the seller can substitute a higher payment consistent with his observations. We show why lower-tail privacy makes this manipulation unavoidable even when further questions are permitted: early value identification lets the seller compute the bidder's truthful answers to later questions herself. Related work studies credibility with privately informed sellers \citep{BanchioSkrzypaczYang2025} and seller-submitted shill bids \citep{KomoKominersRoughgarden2025,Shinozaki2026}. Proofs omitted from the main text appear in the Appendix.

\section{Outcomes, Information, and Credibility}\label{sec:model}

A seller allocates one object among bidders $N=\{1,\ldots,n\}$, with $n\ge 3$. Bidder $i$ has private value $v_i$ in the common finite set $V=\{a_1<\cdots<a_m\}$, where $a_1\ge 0$ and $m\ge 3$. Utility is value times allocation minus payment. Ties among bidders with the highest value are broken in favor of the bidder with the smallest index; we say that a bidder with a smaller index has \emph{priority} over one with a larger index. An efficient rule assigns the object to
\[
w(v)=\min\{i: v_i=\max_j v_j\}.
\]
Write $t_i(v)$ for bidder $i$'s payment to the seller and $F(v)=(w(v),t_1(v),\ldots,t_n(v))$ for the full outcome. We normalize loser payments to zero. Ex post individual rationality requires nonnegative truthful utility. Dominant-strategy incentive compatibility (DSIC) means that truthful reporting is weakly optimal in the induced direct rule for every fixed profile of opponents' reports.

The second-price rule charges the winner
\begin{equation}\label{eq:prices}
t_{w(v)}(v)=\ell(v):=\max_{j\ne w(v)}v_j,
\end{equation}
and nothing to losers. It is normalized, efficient, DSIC, and ex post individually rational. A bidder $j\ne w(v)$ with $v_j=\ell(v)$ is a \emph{price-setter}. Throughout, $F$ denotes this second-price rule.\footnote{On $V=\{0,\ldots,K\}$, the standard ascending clock charges the winner his smallest winning value: $\ell(v)+1$ when a price-setter has priority over him, and $\ell(v)$ otherwise. See Section~\ref{sa:clock} of the Supplemental Appendix.}

\begin{definition}[Private query protocol]\label{def:protocol}
A protocol $P$ is a finite deterministic tree. At each nonterminal node the seller privately sends one bidder a query $q$. Each query specifies a finite answer set $A_q$ and a response map $\rho_q:V\to A_q$, where $\rho_q(v_i)$ is bidder $i$'s truthful answer. The answer determines the next node. A terminal node specifies an allocation and payments. The protocol implements $F$ if truthful answers produce $F(v)$ at every profile.
\end{definition}

Questions can be arbitrary and may be redundant. For example, the query ``Is your value at least $p$?'' specifies the answer yes when $v_i\ge p$ and no otherwise. Thus, a seller who knows a bidder's value can compute his truthful answer to any specified query. Throughout, bidder $i$ answers query $q$ with $\rho_q(v_i)$, including when the seller deviates. %

A bidder observes only his ordered own query--answer sequence and his own allocation and payment, not others' messages or payments, the global order of communication, elapsed time, or a common record. A query may also state what has happened so far. Such a statement is itself a private seller message: the recipient cannot verify it with other bidders, and a deviating seller may falsify it.

Let $z^P(v)$ record the complete ordered sequence of queries and answers, including each query's recipient and any stated history, together with the final allocation and all payments. Let $\Obs^P_i(v)$ be bidder $i$'s own observation, consisting of his query--answer sequence, allocation, and payment.

\subsection{Contextual Privacy and Its Lower-Tail Restriction}

We first state the contextual privacy criterion of \citet{HauptHitzig2026} in our notation. They call the absence of any contextual privacy violation \emph{full contextual privacy}.

\begin{definition}[Contextual privacy]\label{def:contextual}
A protocol $P$ implementing $F$ is \emph{contextually private} if, for every $v\in V^n$, every bidder $i$, and every alternative value $v'_i\in V$,
\begin{equation}\label{eq:contextual}
F(v'_i,v_{-i})=F(v)\quad\Longrightarrow\quad z^P(v'_i,v_{-i})=z^P(v).
\end{equation}
It satisfies \emph{contextual privacy for losers} if \eqref{eq:contextual} is required only for bidders $i\ne w(v)$. Outcome equality ensures that such a bidder loses at both profiles.
\end{definition}

The criterion compares two values of one bidder while holding all other values fixed: if the full outcome is the same, the seller must observe the same terminal transcript. The following impossibility follows from the proof of Proposition~4 in \citet{HauptHitzig2026}, whose argument fixes the winner and forces a privacy violation for one of two losers.

\begin{proposition}[Contextual privacy for losers is impossible]\label{prop:loser-contextual}
With at least three bidders and at least two common values, no private query protocol implementing the second-price rule satisfies contextual privacy for losers.
\end{proposition}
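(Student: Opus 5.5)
The plan is to fix a winner and two losers. I then use privacy for each loser separately to force three transcripts to coincide, and finally show that the protocol cannot distinguish a fourth profile whose price differs.

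\textbf{Setup.} Fix bidder~1 with $v_1=a_m$, and set every bidder $k\ge 4$ to $v_k=a_1$. Let bidders~2 and~3 range over $\{a_1,a_2\}$; this needs only $n\ge 3$ and $m\ge 2$. At each such profile bidder~1 wins, because he has the highest value and has priority in any tie. The price is $\ell(v)=\max(v_2,v_3)$, which equals $a_2$ unless $v_2=v_3=a_1$. Write $z(x,y)$ for the transcript $z^P$ at the profile with $v_2=x$ and $v_3=y$.

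\textbf{Forcing three transcripts to coincide.} Profiles $(a_1,a_2)$ and $(a_2,a_2)$ differ only in loser~2's value and give the same full outcome: bidder~1 wins and pays $a_2$. Contextual privacy for losers, applied to bidder~2, therefore gives $z(a_1,a_2)=z(a_2,a_2)$. By the symmetric argument for loser~3, $z(a_2,a_1)=z(a_2,a_2)$. Let $\pi$ be this common path through the tree.

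\textbf{Key observation.} Take any node on $\pi$ whose query $q$ is sent to bidder~2. The node lies on both $z(a_1,a_2)$ and $z(a_2,a_2)$, and the recorded answers are identical. Hence $\rho_q(a_1)=\rho_q(a_2)$. In the same way, every query on $\pi$ sent to bidder~3 satisfies $\rho_q(a_1)=\rho_q(a_2)$. Queries on $\pi$ sent to bidder~1 or to any bidder $k\ge4$ receive the same answers at all four profiles, since those values are held fixed.

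\textbf{Induction on depth.} Now consider the profile $(a_1,a_1)$. Because the tree is deterministic and the next node is determined by the current answer, induction on depth shows that truthful play at $(a_1,a_1)$ follows $\pi$ exactly. At each node, the queried bidder's answer at $(a_1,a_1)$ equals his answer at $(a_2,a_2)$, by the key observation. The protocol therefore reaches the same terminal node and charges bidder~1 the price $a_2$. Since $a_1<a_2$, this contradicts implementation of $F$, which requires the price $a_1$ at $(a_1,a_1)$.

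\textbf{Where care is needed.} The only delicate point is the key observation. It uses the fact that $z^P$ records the complete ordered sequence of queries, recipients and answers. Equality of transcripts therefore implies equality of the answer at every shared node, and not merely of the final outcome. Once that is secured, the induction is routine.
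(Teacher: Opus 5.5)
Your proof is correct and follows the argument the paper attributes to Haupt--Hitzig: fix bidder~1 as the winner at $a_m$ and let two losers take values in $\{a_1,a_2\}$. Protecting both losers at the corner profile $(a_2,a_2)$ would, by the product structure of deterministic transcripts, send $(a_1,a_1)$ to the same terminal node and so charge the wrong price $a_2$ instead of $a_1$.
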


Contextual privacy for losers requires the same transcript even when a losing value $v_i$ equals $\ell(v)$ and the alternative $v'_i$ lies below $\ell(v)$, provided both give the same winner and payment. These values nevertheless differ in whether they prevent the price from falling. If $v_i=\ell(v)$ stays fixed, the same winner $w(v)$ cannot pay less than $\ell(v)$ even if the other losers' values fall. If instead $v_i<\ell(v)$, lowering the other losers' values sufficiently reduces the payment while keeping $v_i$ and the winner unchanged. Our privacy objective is to conceal how far $v_i$ lies below $\ell(v)$, while allowing the protocol to distinguish whether $v_i$ prevents a price decrease. We therefore protect comparisons between two losing values $v_i$ and $v'_i$ both strictly below $\ell(v)$. Call bidder $i\ne w(v)$ a \emph{low loser} at $v$ if $v_i<\ell(v)$.

\begin{definition}[Lower-tail contextual privacy]\label{def:privacy}
A protocol $P$ satisfies \emph{lower-tail contextual privacy} if, for every $v\in V^n$, every loser $i\ne w(v)$, and every alternative value $v'_i\in V$,
\begin{equation}\label{eq:privacy}
\max\{v_i,v'_i\}<\ell(v)\quad\Longrightarrow\quad z^P(v'_i,v_{-i})=z^P(v).
\end{equation}
\end{definition}

For a loser $i$, $\max\{v_i,v'_i\}<\ell(v)$ implies $F(v'_i,v_{-i})=F(v)$: changing his value within this range leaves the winner and all payments unchanged. Haupt--Hitzig's criterion protects every pair of values that gives the same outcome; ours protects only pairs of losing values both strictly below the price, and is thus weaker even than contextual privacy for losers. A losing value equal to the price may be distinguished from a lower value even when the outcome is unchanged. The following example illustrates the protection and its interpretation in terms of price reductions.

\begin{example}[Two values below the price]\label{ex:price-reduction}
Take three bidders and $V=\{0,1,2,3,4\}$. Consider bidder 3's values of 1 and 2 at $(4,3,1)$ and $(4,3,2)$. At both profiles, bidder 1 wins and pays 3. Lowering bidder 2's value to 1 gives $(4,1,1)$ and $(4,1,2)$, respectively. This reduces bidder 1's payment while keeping bidder 3's value fixed: the new payments are 1 and 2, respectively. By contrast, if bidder 3's value is 3, giving $(4,3,3)$, bidder 1 cannot pay less than 3 as long as bidder 3 remains at 3. Lower-tail contextual privacy requires the same transcript at $(4,3,1)$ and $(4,3,2)$ but permits either profile to be distinguished from $(4,3,3)$.
\end{example}

The example suggests an equivalent formulation. Consider two values of a loser that give the same outcome, holding everyone else's values fixed. For each value, consider whether the same winner could pay less at some profile that keeps this loser's value fixed but changes other bidders' values. Require the same transcript whenever such a price reduction is possible for both values. Under the second-price rule, this is equivalent to Definition~\ref{def:privacy}, so we could adopt it as the definition without changing any of our second-price results.\footnote{\label{fn:price-reduction}Formally, let $v'=(v'_i,v_{-i})$ satisfy $F(v')=F(v)$, with common winner $w\ne i$. Require $z^P(v')=z^P(v)$ whenever, for each $\bar v_i\in\{v_i,v'_i\}$, there is an opponents' profile $\bar v_{-i}\in V^{n-1}$ such that
\[
w(\bar v_i,\bar v_{-i})=w
\qquad\text{and}\qquad
t_w(\bar v_i,\bar v_{-i})<t_w(v).
\]
Under the second-price rule, such a reduction is possible exactly when $\bar v_i<\ell(v)$. See also Remark~\ref{rem:sa-rulebased} of the Supplemental Appendix.} This formulation also provides a way to define privacy through the payment function for other pricing rules.

Lower-tail contextual privacy is feasible on its own. The descending implementation of Proposition~\ref{prop:grids}(ii) asks, from the top value downwards, whether a bidder's value is at least the current value and stops once the winner and the highest-priority price-setter have answered yes. No question has a threshold below $\ell(v)$, so the transcript is unchanged across the protected comparisons. The conflict arises when the seller must also have an incentive to follow the protocol.

\Needspace{12\baselineskip}
\subsection{Credibility under Private Observations}
To assess credibility, we consider a procedure $D$ that the seller
may use instead of the prescribed protocol $P$.
Such a procedure need not implement $F$.

A run $D$ is a deterministic procedure in which the seller chooses each query and its recipient based on earlier queries and answers, and eventually selects a feasible allocation and payments, with zero loser payments. Write $\Obs^D_i(v)$ for bidder $i$'s observation under $D$ at $v$. The run is \emph{safe} if, for every $v$ and $i$, some opponents' profile $\widehat v_{-i}\in V^{n-1}$ satisfies
\begin{equation}\label{eq:safe}
\Obs^D_i(v)=\Obs^P_i(v_i,\widehat v_{-i}).
\end{equation}
Call $\widehat v_{-i}$ an \emph{explanation} for $i$'s observation under $D$ at $v$: he could receive that observation if the seller followed $P$ at $(v_i,\widehat v_{-i})$.\footnote{We use \emph{honest} for the seller following $P$ and \emph{truthful} for bidders' responses.} Write $\Rev(D,v)$ for the seller's revenue (total payments received) under $D$ at $v$. In particular, $\Rev(P,v)=\ell(v)$.

\begin{definition}[Prior-free credibility]\label{def:credibility}
A protocol $P$ is \emph{prior-free credible} if no safe run satisfies $\Rev(D,v)\ge\Rev(P,v)$ at every $v$, with strict inequality somewhere.
\end{definition}

Following \citet{AkbarpourLi2020}, prior-free credibility is weaker than credibility under a specified full-support prior, which forbids any safe expected-revenue improvement. We allow withholding at zero revenue; requiring sale leaves the impossibility unchanged. Participants are fixed, without fictitious bidders or collusion.

On $V=\{0,\ldots,K\}$, Appendix~\ref{app:ascending} constructs an ascending implementation with announced histories that is prior-free credible under mandatory sale and protects the winner (Proposition~\ref{prop:scan}). Thus each requirement is feasible on its own with mandatory sale, yet they remain incompatible even with this restriction.

\section{Privacy Forces an Opportunity to Raise the Price}\label{sec:main}

\begin{theorem}[Second-price privacy and credibility]\label{thm:main}
For $n\ge 3$ and $m\ge 3$, no private query protocol implementing the second-price rule is both lower-tail contextually private and prior-free credible.
\end{theorem}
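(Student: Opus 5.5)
Fix a protocol $P$ that implements $F$ and satisfies lower-tail contextual privacy. The plan is to build a safe run $D$ that weakly raises revenue at every profile and strictly raises it at some profile. The argument has three steps.

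\textbf{Step 1: the first informative query.} Follow the path of $P$ from the root. A query is \emph{informative} if its response map is nonconstant. Each uninformative query gets the same answer at every profile, so the path up to the first informative query $q$ is the same everywhere. If no query were informative, the transcript would be constant, contradicting the fact that the winner varies. Let $i$ be the recipient of $q$. Choose two opponents $j,k\ne i$; this needs $n\ge3$. Consider profiles with $v_j=v_k=a_m$ and any values for the remaining bidders. At such a profile the winner is not $i$ unless $v_i=a_m$, the price is $a_m$, and every $v_i<a_m$ makes $i$ a low loser. Lower-tail privacy then forces identical transcripts, and so $\rho_q(a_r)$ is the same for all $r<m$. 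Because $q$ is informative, $\rho_q(a_m)$ differs from this common answer. Hence $q$ reveals exactly whether $v_i=a_m$.

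\textbf{Step 2: the deviation.} Under $D$ the seller runs $P$ honestly, except on the branch where $i$ answers $q$ with $\rho_q(a_m)$. On that branch she knows $v_i=a_m$. She continues $P$, but she computes herself every answer that $i$ would give and sends $i$ no further queries. Everyone else is queried as in $P$, so she reaches the honest terminal node. If $P$ awards the object to someone other than $i$, she keeps the honest outcome. If $i$ wins, she instead charges $i$ the price $a_m$, the highest payment he could owe. She also replays to $i$ the query sequence that $P$ would send him at $(a_m,\hat v_{-i})$ with $\hat v_j=a_m$ for some $j$ of lower priority than $i$ and all other opponents at $a_1$. At that profile $i$ still wins and pays $a_m$. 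To make $i$ strictly gain the manipulation, pick $i$'s priority relative to the chosen $j$ appropriately. If $i=n$, no such $j$ exists. In that case, if $P$ has $i$ winning, she takes $j<i$ with value $a_m$, but then $j$ wins, so this explanation fails. For this reason the case of the lowest-priority bidder must be handled separately.

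\textbf{Step 3: safety and revenue.} Revenue never falls, because it changes only when $i$ wins, and then it rises from $\ell(v)\le a_m$ to $a_m$. It strictly rises at profiles such as $(a_m,a_1,\dots)$ with $i$ winning, where $\ell(v)=a_1<a_m$; here $m\ge3$ is not even needed, though it enters the privacy step through the choice of profiles. Safety for $i$ holds by construction, since his observation equals $\Obs^P_i(a_m,\hat v_{-i})$. Safety for every other bidder holds because his queries, answers and outcome are exactly the honest ones at $v$. Each of them receives messages depending only on the true answers, and $i$'s computed answers coincide with his true answers. The explanation is therefore $v_{-\ell}$ itself.

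\textbf{Main obstacle.} The difficult step is choosing which bidder $i$ receives $q$, and handling the case where $i$ has the lowest priority (or $i$ has priority over everyone, so $i$ is bidder $1$). When $i=n$, a winning $i$ requires every other bidder to have a value below $a_m$, and so his payment is at most $a_{m-1}$. The seller then inflates the payment to the maximal payment consistent with $i$ winning, namely $a_{m-1}$ rather than $a_m$. The explanation puts one opponent at $a_{m-1}$ and the rest at $a_1$, and this is a strict gain at profiles with price $a_1$. The key point is that, to charge $a_{m-1}$, the seller must still be able to compute $i$'s later answers given $v_i=a_m$. Since $v_i=a_m$ is known exactly, this works. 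I expect that the place where $m\ge3$ genuinely matters is in making sure that the privacy argument in Step 1 applies. If $m=2$, no value lies strictly below the price other than $a_1$, so Step 1 gives no restriction. With $m\ge3$, there are at least two values below $a_m$ that privacy forces to share an answer.
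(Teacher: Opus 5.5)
\textbf{Assessment.} Your route is the paper's. Step 1 is Lemma~\ref{lem:first}, and Step 2 is the deferral-and-replacement run of Lemma~\ref{lem:replace}. Your split between $i<n$ (charge $a_m$, explained by a lower-priority rival at $a_m$) and $i=n$ (charge $a_{m-1}$, explained by one rival at $a_{m-1}$) is exactly the case split in the paper's proof of Theorem~\ref{thm:main}. Two points still need repair.

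\textbf{The losing branch.} You do not handle the branch where $v_i=a_m$ but $i$ loses, which happens when a higher-priority rival also has $a_m$. As written, the seller ``sends $i$ no further queries'' and then ``keeps the honest outcome.'' So $i$ sees a truncated query sequence followed by a loss. Nothing guarantees that this equals $\Obs^P_i(a_m,\widehat v_{-i})$ for any $\widehat v_{-i}$; it fails, for instance, if $P$ always ends with a redundant query to $i$. Your safety check in Step 3 covers $i$ only when he wins. The fix, as in the paper, is to send $i$ his recorded queries in their original order after the simulation, and then implement $F(v)$. Then $v_{-i}$ explains his observation, because bidders do not observe the global order of communication.

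\textbf{The role of $m\ge3$.} You have it backwards. Step 1's conclusion holds trivially when $m=2$, since any nonconstant map on a two-point set separates $a_2$ from $a_1$. The hypothesis is needed for the strict gain in the case $i=n$. There, a top-value bidder's largest winning payment is $a_{m-1}$. The gain at the profile with all rivals at $a_1$ is therefore $a_{m-1}-a_1$, which is positive only if $m\ge3$. With $m=2$, bidder $n$ with value $a_2$ always pays $a_1$ when he wins. So your remark that ``$m\ge3$ is not even needed'' in the revenue step is false in this case.

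\textbf{Minor points.} The recipient $i$ is fixed by $P$, not chosen; only the explaining profile is chosen. Bidder $1$ needs no separate treatment, since he falls under the $i<n$ case.
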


We prove the theorem in two steps, first characterizing the first informative query and then constructing a safe seller deviation.
\subsection{The First Informative Answer}

We call a query \emph{informative} if its response map $\rho_q$ is nonconstant on $V$. Before the first informative query, every profile generates the same initial sequence of queries and answers: no earlier answer distinguishes any values. These initial queries may be redundant and may involve several bidders.

\begin{lemma}[Privacy and the first question]\label{lem:first}
In a protocol implementing the second-price rule and satisfying lower-tail contextual privacy, the first informative query has two attainable answers: one shared by all types below $a_m$, and one identifying $a_m$.
\end{lemma}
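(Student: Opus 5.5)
Let $q$ be the first informative query and $i$ its recipient. We will show that $\rho_q$ is constant on $\{a_1,\ldots,a_{m-1}\}$. Since $\rho_q$ is nonconstant on $V$, the value $\rho_q(a_m)$ must then differ from this common answer. Exactly two answers are attainable, and the second identifies $a_m$.

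The key observation is that every profile reaches the node of $q$ along the same initial sequence, with $q$ sent to the same bidder $i$. Choose two opponents $j,k\ne i$; this is possible because $n\ge 3$. Fix a profile $v$ with $v_j=v_k=a_m$; the remaining coordinates may be arbitrary, say $a_m$ as well. Then for any $v_i<a_m$, bidder $i$ loses, because some bidder with value $a_m$ wins. Moreover $\ell(v)=a_m$: whichever of $j,k$ wins, the other is a loser with value $a_m$, and if some third bidder wins, both $j$ and $k$ are losers with value $a_m$. Hence any two values $v_i,v'_i\in\{a_1,\ldots,a_{m-1}\}$ satisfy $\max\{v_i,v'_i\}<\ell(v)$.

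Definition~\ref{def:privacy} then gives $z^P(v'_i,v_{-i})=z^P(v_i,v_{-i})$. In particular, the answer that $i$ gives to $q$ is the same in both transcripts. Because the query $q$ is reached identically at both profiles and the transcript records its answer, we get $\rho_q(v_i)=\rho_q(v'_i)$. This holds for all values below $a_m$, which completes the argument.

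The only delicate point is justifying that $q$ and its recipient do not depend on the profile. This follows because every earlier query has a constant response map, so the tree path up to $q$ is profile-independent. We also need the winner and price computation when ties put $j$ or $k$ ahead or behind $i$ in priority, which the case split above handles. Note that $m\ge 3$ is not needed here.
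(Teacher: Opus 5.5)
Your proof is correct and follows the paper's argument: fixing two opponents at $a_m$ makes every value of $i$ below $a_m$ a low loser at price $a_m$, so lower-tail privacy forces a common answer to $q$; the paper puts the remaining opponents at $a_1$ rather than $a_m$, which is immaterial. The one point you leave implicit is that a first informative query exists. The paper obtains this from efficiency: the winner varies across profiles, so a profile-independent path from the root could not implement $F$.
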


\begin{proof}
Efficiency requires the winner to vary across profiles, so a first informative query exists. Let $i$ be its recipient. Fix two other bidders at $a_m$ and every remaining opponent at $a_1$. For any $v_i<a_m$, bidder $i$ is a low loser and the price is $a_m$. These values give the same winner and payments and are all strictly below the price. Lower-tail contextual privacy therefore makes their transcripts, and hence their answers, identical. No earlier answer has ruled out any value in $V$. Since the query is informative, $a_m$ gives a different answer, and no other attainable answers remain.
\end{proof}

\subsection{Replacing a Price Explanation}

For each bidder define
\[
W_i=\{u_{-i}\in V^{n-1}: w(a_m,u_{-i})=i\},\qquad M_i=\max_{u_{-i}\in W_i}t_i(a_m,u_{-i}).
\]
Thus, $W_i$ consists of the opponents' profiles against which bidder $i$ wins with value $a_m$, and $M_i$ is his largest payment across those profiles. The finite set $W_i$ is nonempty because value $a_m$ wins against opponents at $a_1$.

\Needspace{6\baselineskip}
\begin{lemma}[Replacing the winner's remaining queries and payment]\label{lem:replace}
Suppose one answer to the first informative query identifies bidder $i$'s value as $a_m$. There is a safe run that keeps the allocation under $P$, charges $M_i$ whenever $v_i=a_m$ and $w(v)=i$, and implements $F(v)$ otherwise. If bidder $i$ can win with value $a_m$ at different payments as opponents' values vary, then $P$ is not prior-free credible.
\end{lemma}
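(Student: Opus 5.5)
The plan is to build the deviation $D$ explicitly by following $P$ until its first informative query, which is sent to bidder $i$, and then branching on the answer. The setup fixes $\widehat u_{-i}\in W_i$ with $t_i(a_m,\widehat u_{-i})=M_i$, so bidder $i$'s observation $\Obs^P_i(a_m,\widehat u_{-i})$ is a fixed sequence of queries and answers ending in a win at payment $M_i$.

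If the answer does not identify $a_m$, the seller follows $P$ exactly, and all outcomes and observations coincide with the honest ones.

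If the answer identifies $v_i=a_m$, the seller keeps running $P$ but never actually sends queries to bidder $i$. Whenever $P$ calls for a query $q$ to $i$, she computes $\rho_q(a_m)$ herself. Because bidders answer truthfully and $v_i$ is now known, this reproduces exactly the path $P$ would follow at $v$. The path reaches the terminal node of $P$ at $v$, with winner $w(v)$.
\begin{itemize}[leftmargin=*]
\item If $w(v)\neq i$, the seller must give the actual messages to $i$ as well. She does so by having actually queried $i$ all along. This is consistent because his answers are the simulated ones, and she implements $F(v)$.
\end{itemize}
To make this case split coherent, I would instead simulate only the path, not the communication. After the identifying answer, she sends real queries only to bidders other than $i$, along the path of $P$ at $v$ with $i$'s answers computed. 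Once the terminal node reveals $w(v)$, she finishes bidder $i$'s communication privately:
\begin{itemize}[leftmargin=*]
\item If $w(v)=i$, she sends him the remaining queries of $\Obs^P_i(a_m,\widehat u_{-i})$ and charges $M_i$.
\item Otherwise, she sends him the remaining queries he would receive under $P$ at $v$ and implements $F(v)$.
\end{itemize}
The key observation is that each bidder sees only his own ordered sequence, not its timing relative to others. Delaying all of $i$'s later messages is therefore invisible to him, and his answers are deterministic functions of $a_m$, so nothing the seller does depends on information she lacks.

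Safety is checked bidder by bidder. Any bidder $j\neq i$ receives exactly $\Obs^P_j(v)$: his queries come from the honest path, and the allocation and payments to losers are unchanged, since $i$ is the winner at $v$ or nothing changes. So $v_{-j}$ explains his observation. Bidder $i$ receives either $\Obs^P_i(v)$ or $\Obs^P_i(a_m,\widehat u_{-i})$. The latter is legitimate because the first informative query's prefix is common to all profiles and the answer $a_m$ agrees, so $\widehat u_{-i}$ is an explanation.

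For revenue, at profiles with $v_i=a_m$ and $w(v)=i$ we have $t_i(v)\le M_i$ by definition of $M_i$, and elsewhere revenue is unchanged. If $i$'s winning payment with value $a_m$ varies, some $u_{-i}\in W_i$ has $t_i(a_m,u_{-i})<M_i$, giving strict improvement there. Hence $P$ is not prior-free credible.

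The main obstacle is the interleaving. $P$'s later queries to others may depend on $i$'s answers, and $i$'s honest transcript at $\widehat u_{-i}$ need not be a prefix-compatible continuation. The delay-then-replay construction above resolves this, relying on the fact that a bidder observes only his own ordered sequence.
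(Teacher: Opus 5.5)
Your final delay-then-replay construction is correct and is essentially the paper's proof. You compute $i$'s answers as $\rho_q(a_m)$, withhold his later queries, and then either replay his own recorded queries if he loses, or send the queries from $P$ at $(a_m,u^*_{-i})$ and charge $M_i$ if he wins; safety rests on each bidder observing only his own ordered sequence. The first, abandoned bullet (saying she queried $i$ ``all along'') contradicts that construction and should simply be deleted.
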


\begin{proof}
Fix $u^*_{-i}\in W_i$ attaining $M_i$. We construct a seller's run $D$. The seller follows $P$ through the first informative query and continues following $P$ unless the answer identifies $v_i=a_m$. With the identifying answer, she simulates the remaining steps of $P$: she sends each query to $j\ne i$ and receives his answer; for a query to $i$, she computes $\rho_q(a_m)$ and records the query without sending it yet. This determines the outcome $F(v)$.

If $i$ loses in the simulation, the seller sends him the recorded queries in their original order and implements $F(v)$. If he wins, she instead sends the remaining queries he would receive under $P$ at $(a_m,u^*_{-i})$ and allocates to him at payment $M_i$. His earlier queries also agree with this execution of $P$, since every profile shares the sequence through the first informative query. His answers agree throughout because $v_i=a_m$.

Each bidder $j\ne i$ receives the same queries and own outcome as under $P$ at $v$; when $i$ wins, these bidders lose and do not observe his payment. Thus $v_{-j}$ explains their observations. On the identifying branch, $i$ uses $v_{-i}$ if he loses and $u^*_{-i}$ if he wins; otherwise $D$ follows $P$. Delaying his queries changes the order of communication across bidders, which they do not observe. Hence $D$ is safe. Revenue is unchanged unless $v_i=a_m$ and $w(v)=i$; on that branch it rises weakly to $M_i$, and strictly whenever $t_i(v)<M_i$.
\end{proof}

Privacy supplies the early identifying event; private observations let different bidders rationalize their messages with different opponents' values.

\begin{proof}[Proof of Theorem~\ref{thm:main}]
Let $i$ receive the first informative query. Lemma~\ref{lem:first} gives an answer identifying $v_i=a_m$. At this value, he pays $a_1$ against opponents at $a_1$. If $i<n$, raising an opponent $j>i$ to $a_m$ gives him a winning payment of $a_m$. If $i=n$, an opponent at $a_{m-1}$ and all others at $a_1$ give payment $a_{m-1}$. Both exceed $a_1$ when $m\ge 3$, so Lemma~\ref{lem:replace} contradicts prior-free credibility.
\end{proof}

\section{The Size of the Available Price Increase}\label{sec:size}

Theorem~\ref{thm:main} identifies a bidder with the highest possible value, an event that can become unlikely on a fine grid. We bound available revenue increases across all lower-tail contextually private protocols and compute the expected gain for a descending construction. Fix $n\ge3$ and $0\le L<U$. For each $m\ge3$, let
\[
G_m=\{L,L+\Delta,\ldots,U\},\qquad \Delta=\frac{U-L}{m-1}.
\]

\begin{proposition}[Fine grids]\label{prop:grids}
\begin{enumerate}[label=(\roman*)]
\item For every $m\ge3$, every lower-tail contextually private protocol on $G_m$ admits a safe deviation that weakly increases revenue at every profile and increases it by at least $U-L-\Delta$ at some profile.
\item There are sequences $(P_m)_{m\ge3}$ of lower-tail contextually private descending implementations and $(D_m)_{m\ge3}$ of safe deviations from them such that $D_m$ weakly increases revenue at every profile and, under independent uniform draws from $G_m$,
\[
\lim_{m\to\infty}\mathbb{E}\bigl[\Rev(D_m,v)-\Rev(P_m,v)\bigr]=\frac{U-L}{n+1}>0.
\]
\end{enumerate}
\end{proposition}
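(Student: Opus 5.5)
For part (i), I would combine Lemma~\ref{lem:first} and Lemma~\ref{lem:replace} on $V=G_m$, now tracking the size of the gain. Let $i$ be the recipient of the first informative query. By Lemma~\ref{lem:first}, one of its answers identifies $v_i=a_m=U$. Lemma~\ref{lem:replace} then supplies a safe run $D$ that weakly increases revenue at every profile and charges $M_i$ whenever $v_i=U$ and $i$ wins.

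It remains to bound $M_i-t_i(v)$ at a suitable profile. I would evaluate both quantities as in the proof of Theorem~\ref{thm:main}.
\begin{itemize}
\item If $i<n$, put some $j>i$ at $U$, so that $i$ still wins by priority and pays $U$. Hence $M_i=U$.
\item If $i=n$, then $M_n=U-\Delta$, attained with one opponent at $U-\Delta$.
\end{itemize}
At the profile with $v_i=U$ and all opponents at $L$, the honest payment is $L$. The gain there is therefore at least $U-\Delta-L$, which is the claimed bound.

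For part (ii), I would first fix the descending protocol $P_m$ concretely. Thresholds run $p=U,U-\Delta,\ldots,L$. At each threshold, each bidder who has not yet answered yes is asked in index order whether $v_j\ge p$. Once some bidder has said yes, he receives no further queries. The protocol stops when two bidders have answered yes. The first yes-sayer $w$ (lowest index among the top value) wins, and the price is the threshold at which the second yes arrives, namely $\ell(v)$. No threshold falls below $\ell(v)$, so changes of a low loser's value do not alter the transcript, which gives lower-tail contextual privacy.

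The key structural point is that the winner's observation consists of no answers at thresholds above $v_w$, a yes at $v_w$, and then his outcome. This observation does not depend on $v_{-w}$ except through the final payment. The deviation $D_m$ runs $P_m$ with the other bidders exactly as prescribed, so that losers' observations and outcomes are unchanged and explained by the true profile. It then charges the winner $v_w$ if $w<n$ and $v_w-\Delta$ if $w=n$.

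Safety for the winner follows from an explicit explanation. If $w<n$, a rival $j>w$ at $v_w$ with all others at $L$ works. If $w=n$, one rival at $v_w-\Delta$ works, and this case is only reached when $v_n>L$, since bidder $n$ wins only if all others are strictly below him. Revenue weakly increases because $\ell(v)\le v_w$ always, and $\ell(v)\le v_n-\Delta$ when $w=n$. One point needs care: the winner must be unable to observe the time or order at which later queries to the others occur. This holds because observations are private.

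Finally, I would compute the expected gain. It equals $\mathbb{E}[v_{(1)}-v_{(2)}]$ minus $\Delta\cdot\Pr(w=n)$. The correction is at most $\Delta\to0$. Ties contribute zero to $v_{(1)}-v_{(2)}$. Under independent uniform draws on $G_m$, the top two order statistics converge in distribution (with bounded support) to those of $n$ independent draws from $U[L,U]$, whose expected spacing is $(U-L)/(n+1)$.

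I expect the main obstacle to be specifying $P_m$ so that the winner's observation is truly independent of the price, while all losers' transcripts stay fixed under the deviation. A secondary obstacle is stating the limit cleanly; bounded convergence of the discrete order statistics suffices.
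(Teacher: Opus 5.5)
Your proposal is correct and follows the paper's proof. Part (i) combines Lemma~\ref{lem:first} with Lemma~\ref{lem:replace} and compares the payment $L$ against bottom opponents with a largest winning payment of at least $U-\Delta$. Part (ii) uses the same descending protocol, the same deviation charging $v_w-\Delta\mathbf{1}\{w=n\}$, and the same explanatory profiles for the winner.

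The only difference is how you take the limit. The paper writes $\mathbb{E}[v_w-\ell(v)]$ exactly as $\Delta n\sum_{t=1}^{m-1}(1-t/m)(t/m)^{n-1}$ and recognizes a Riemann sum. You instead pass to the limit by weak convergence of the top two order statistics to the continuous uniform case, with the expectation of the bounded continuous function $x_{(1)}-x_{(2)}$ converging. Both arguments are valid.
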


\begin{proof}
For part (i), let $i$ receive the first informative query and apply Lemma~\ref{lem:replace} after its identifying answer. With value $U$, bidder $i$ pays $L$ against opponents at $L$, while his largest winning payment is at least $U-\Delta$. His payment can therefore increase by at least $U-L-\Delta$.

For part (ii), define $P_m$ by lowering a threshold from $U$ to $L$ in steps of $\Delta$. At each threshold, ask bidders in priority order whether their value is at least that threshold. The first \emph{yes} identifies the efficient winner $w=w(v)$; stop querying him and continue with the next bidder at the same threshold. The second \emph{yes} identifies the highest-priority price-setter; stop and charge $w$ the price $\ell(v)$. Questions at $L$ ensure that both affirmative answers occur. Every query asked at $v$ has threshold at least $\ell(v)$, so changing a low loser's value below $\ell(v)$ leaves the transcript and outcome unchanged.

Define $D_m$ by asking the same queries as $P_m$ but charging $w$ the amount $v_w-\Delta\mathbf1\{w=n\}$. The winner answers \emph{no} to each query at a threshold above $v_w$, then \emph{yes} at $v_w$, and receives no further queries. Conditional on his winning with value $v_w$, this sequence is independent of opponents' values. If $w<n$, set a lower-priority opponent's value to $v_w$ and every other opponent's value to $L$. Under $P_m$, bidder $w$ then wins, receives the same queries, and pays $v_w$. If $w=n$, winning implies $v_w>L$; setting bidder 1's value to $v_w-\Delta$ and all other opponents' values to $L$ gives the same winning observation with payment $v_w-\Delta$. Each loser retains his queries and own outcome under $P_m$ at $v$, so $D_m$ is safe.

The gain is $v_w-\ell(v)-\Delta\mathbf1\{w=n\}\ge0$.
The gap $v_w-\ell(v)$ equals $\Delta$ times the number of
grid boundaries separating the largest and second-largest values.
At the boundary between the $t$-th and $(t+1)$-st grid points,
each value lies below the boundary with probability $t/m$.
Under independent uniform draws, the probability that exactly
one value lies above it is therefore
$n(1-t/m)(t/m)^{n-1}$.
Summing these contributions gives
\[
\mathbb{E}[\Rev(D_m,v)-\Rev(P_m,v)]
=
\underbrace{
\Delta n\sum_{t=1}^{m-1}
\Bigl(1-\frac tm\Bigr)\Bigl(\frac tm\Bigr)^{n-1}
}_{\mathbb{E}[v_w-\ell(v)]}
-
\underbrace{\Delta\Pr(w=n)}_{\le\Delta\to0}.
\]
To obtain the limit, rewrite the first term as
\[
\Delta m\,n
\left[
\frac1m\sum_{t=1}^{m-1}
\Bigl(1-\frac tm\Bigr)\Bigl(\frac tm\Bigr)^{n-1}
\right].
\]
The bracketed expression is a Riemann sum for
$\int_0^1(1-x)x^{n-1}\,dx=1/[n(n+1)]$.
Since $\Delta m\to U-L$ and the correction term vanishes,
the expected revenue gain converges to $(U-L)/(n+1)$.
\end{proof}

The lower bound in part (i) approaches $U-L$ and applies to every lower-tail contextually private protocol. The expected-gain limit in part (ii) applies to the constructed descending protocols.

\section{Whose Privacy Can a Credible Auction Protect?}\label{sec:whose}

The impossibility also constrains the choice of bidders to protect. Call bidder $i$ \emph{globally protected} if, for every $v\in V^n$ and every $v'_i\in V$,
\[
\max\{v_i,v'_i\}<\ell(v)\quad\Longrightarrow\quad z^P(v'_i,v_{-i})=z^P(v).
\]
For a protocol implementing the second-price rule, define
\[
\begin{split}
\Gamma_L(P)=\{(i,v):\ &i\text{ is a low loser at }v,\ \exists v'_i\in V\text{ with }v'_i<\ell(v)\\
&\text{such that }z^P(v)\ne z^P(v'_i,v_{-i})\},
\end{split}
\]
the set of bidder-profile pairs violating the lower-tail contextual privacy condition~\eqref{eq:privacy}.

\begin{theorem}[The privacy cost of credibility]\label{thm:cost}
On $V=\{0,\ldots,K\}$ with $K\ge2$, any prior-free credible private query implementation of the second-price rule globally protects at most two bidders. Moreover,
\begin{equation}\label{eq:count}
|\Gamma_L(P)|\ge\binom n3 K(K-1)^{n-3}.
\end{equation}
\end{theorem}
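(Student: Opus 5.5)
The plan is to localize the proof of Theorem~\ref{thm:main} to a slice of profiles in which all but three bidders have fixed values, and then count the violations each slice forces. Fix a triple $S=\{i,j,k\}$ of bidders and a profile $x\in\{0,\ldots,K-2\}^{n-3}$ for the bidders outside $S$. Call the set of profiles $v$ with $v_{N\setminus S}=x$ the \emph{slice} $(S,x)$.

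On the slice, the seller can compute every truthful answer of the outside bidders. So the restriction of $P$ to the slice is a deterministic private-query tree whose informative queries all go to members of $S$. The first step reruns Lemma~\ref{lem:first} inside this restricted tree. Let $r\in S$ receive the first query that is informative on the slice, and set the other two members of $S$ to $K$. Then every value $v_r\le K-1$ makes $r$ a low loser at price $K$. Suppose all profiles of the form $(v_r, K, K, x)$ with $v_r\in\{0,\ldots,K-1\}$ generate the same transcript. Then that query has one answer shared by all values below $K$ and one answer identifying $K$. Such a query must exist, because the winner varies across the slice.

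The second step reruns Lemma~\ref{lem:replace} with $W_r$ and $M_r$ restricted to the slice. The seller follows $P$ everywhere outside the slice and until the identifying answer inside it. After that answer, she simulates bidder $r$'s remaining answers. If $r$ wins, she sends him the messages he would receive at the slice profile attaining $M_r$ and charges $M_r$. Safety is argued exactly as before: bidders other than $r$ keep their own queries and outcomes, and $r$ is explained by a slice profile. The price indeed varies. Against $S\setminus\{r\}$ at $0$, bidder $r$ with value $K$ pays $\max x\le K-2$. If a lower-priority member of $S$ is at $K$, he pays $K$; if $r$ is the lowest-priority member of $S$, putting a higher-priority member at $K-1$ gives payment $K-1$. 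Either way the payment exceeds $K-2$. Hence prior-free credibility forces, in every slice, some $r\in S$ and two values in $\{0,\ldots,K-1\}$ that produce different transcripts at $(v_r,K,K,x)$.

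The third step does the counting. Once the values below $K$ split into at least two answer classes, every $v_r\in\{0,\ldots,K-1\}$ has a transcript that differs from that of some other value below the price $K$. This yields $K$ pairs $(r,v)\in\Gamma_L(P)$ per slice. These pairs are distinct across slices. Indeed, $v$ determines the triple as $r$ together with the exactly two bidders at value $K$, since the outside bidders are at most $K-2$ and $v_r<K$. It also determines $x$. The number of slices is $\binom n3 (K-1)^{n-3}$, which gives \eqref{eq:count}. Taking $S$ to be any three globally protected bidders contradicts the forced violation, so at most two bidders can be globally protected.

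The main obstacle is the second step. One must check that the localized Lemma~\ref{lem:replace} deviation is genuinely safe for the whole protocol. In particular, the common prefix before the slice-informative query may already contain queries to outside bidders whose answers depend on $x$. The bidder-$r$ explanation must therefore be drawn from the same slice, so that his earlier queries match. I would handle this by choosing the maximizer $u^*_{-r}$ within the slice, with $u^*_{N\setminus S}=x$, and verifying that all of $r$'s pre-identification observations coincide across the slice.
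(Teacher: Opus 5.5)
Your overall structure matches the paper's proof: localizing Lemma~\ref{lem:first} to a slice, the payment bounds ($K-2$ against $K-1$ or $K$), and the injective counting. The gap is in your second step: the deviation you describe is not a pointwise weak improvement.

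A run can condition only on queries and answers. The seller therefore cannot ``follow $P$ everywhere outside the slice.'' Her deviation fires at \emph{every} profile that generates the prefix $h$ and the identifying answer. Unless $h$ pins down the outsiders' values, this includes profiles with $v_{N\setminus S}\ne x$, and at those profiles the honest payment can exceed your slice-restricted $M_r$.

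For example, suppose $r$ is the lowest-priority member of $S$. Then $M_r=K-1$, because outsiders on the slice are at most $K-2$. Now suppose some outsider $o>r$ is such that $h$ does not distinguish $x_o$ from $K$ (for instance, $o$ receives no informative query in $h$). The profile with $v_r=v_o=K$, the other two members of $S$ at $0$, and the remaining outsiders at $x$ reaches the same node. There $r$ wins and honestly pays $K$, but your run charges him $K-1$. Your ``main obstacle'' paragraph addresses only whether $r$'s explanation matches, not this loss of revenue. The two requirements pull in opposite directions: the global $M_r$ of Lemma~\ref{lem:replace} may be attained at a profile that does not generate $h$, while the slice maximum may be too small off the slice.

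The fix is the one the paper uses. After the identifying answer, simulate $P$. Implement $F(v)$ whenever $r$ loses or would pay at least $\hat p=\ell(\hat v)$, where $\hat v$ is your slice profile giving payment $K$ or $K-1$. Only when the simulated payment falls below $\hat p$, send $r$ his remaining queries from $\hat v$ and charge $\hat p$. Since $\hat v$ generates $h$ and has $\hat v_r=K$, it explains $r$'s observation even at off-slice profiles sharing that prefix, and revenue never falls.

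Equivalently, you could take the maximum defining $M_r$ over all profiles that reach the identifying node with $r$ winning, rather than over the slice. The maximizer still generates $h$, so it explains $r$'s prefix, and it dominates every honest payment at which the deviation fires. With either change the rest of your argument goes through.
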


Fix three bidders and give all others values at most $K-2$. Their first informative query must distinguish two values below $K$. Otherwise, identifying $K$ allows profitable replacement of that bidder's remaining queries and payment, as in Lemma~\ref{lem:replace}. With the other two members at $K$, those distinguished values lie below the price, violating lower-tail contextual privacy. Every triple therefore contains an unprotected bidder: at least $n-2$ bidders have violating profiles, possibly different ones. Appendix~\ref{app:count} counts these violations to obtain \eqref{eq:count}.%

Under mandatory sale, Proposition~\ref{prop:scan} shows that protecting the winner's surplus is compatible with credibility, while protecting every lower losing value is not.

\Needspace{16\baselineskip}
\section{Conclusion}\label{sec:conclusion}

The result distinguishes information needed to compute an outcome from information needed to sustain the seller's incentives. Protecting the strict lower tail forces the first informative query to identify a top type and exposes his payment to manipulation, even with arbitrary deterministic questions and redundant communication.

With mandatory sale, announced histories and a suitable question order yield a prior-free credible ascending implementation that protects the winner. If all bidders must receive the same terminal announcement of the price and a named price-setter, lower-tail privacy and credibility can coexist (Supplemental Appendix, Section~\ref{sa:record}; see also \citet[p.~13]{Woodward2020}). However, this announcement reveals the price-setting loser's value to all bidders.

The Supplemental Appendix (Section~\ref{sa:general}) extends the impossibility to efficient, incentive-compatible payment rules whose outcomes are unchanged by lower-tail changes. Whether the privacy requirement alone yields such an impossibility when lower losing values can affect payments remains open, as do continuous type spaces and private randomization.

\clearpage
\appendix

\section{A Credible Ascending Auction}\label{app:ascending}

On $V=\{0,\ldots,K\}$, a credible ascending implementation of the second-price rule requires a suitable question order.\footnote{Section~\ref{sa:clock} of the Supplemental Appendix describes the standard clock. Its winner pays his last accepted offer, which can exceed $\ell(v)$. Simply replacing that payment by $\ell(v)$ permits undetectable price increases.} Throughout this appendix, \emph{mandatory sale} means that every run allocates the object to a bidder. This restriction matters when the lowest value is zero: with withholding allowed, the protocol below admits a safe revenue improvement (Supplemental Appendix, Section~\ref{sa:clock}).

\paragraph{The protocol.} Prices rise from $1$ to $K$. All bidders are initially active, and a refusal makes a bidder inactive. Each query states its price $p$ and the complete history of earlier queries and answers, including those of other bidders, then asks whether the recipient's value is at least $p$. At each price:
\begin{enumerate}[label=(\roman*)]
\item \emph{First stage.} Ask active bidders in increasing priority order (largest index first) until one answers \emph{yes}. If only the highest-priority active bidder remains, stop before asking him; he wins and pays $p-1$.
\item \emph{Second stage.} Ask the remaining active bidders in decreasing priority order (smallest index first). For $p<K$, ask all of them even after an affirmative answer. If all answer \emph{no}, the first-stage acceptor wins and pays $p-1$. At $p=K$, stop at the first \emph{yes}; that bidder wins and pays $K$.
\item Otherwise, increase the price to $p+1$.
\end{enumerate}
The price rises only when the first-stage acceptor and a higher-priority bidder both accept, so at least two bidders are active at the start of every price. Each bidder's offered prices strictly increase, and the protocol ends by $K$. In an honest execution, a bidder knows the entire history through his last answer. These histories are privately stated seller messages, not a common record, and a deviating seller may falsify them.

\Needspace{7\baselineskip}
\begin{proposition}[Announced histories and question order]\label{prop:scan}
This protocol implements the second-price rule with payment $\ell(v)$ and is prior-free credible under mandatory sale. It satisfies contextual privacy for the winner: the full transcript is unchanged if his value changes to another value giving him the same winning outcome.
\end{proposition}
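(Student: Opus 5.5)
The proof would establish three claims in turn: that the protocol implements the second-price rule, that it protects the winner, and that it is prior-free credible. The first two are direct inductions along the honest transcript. The third is the real work.

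\emph{Implementation.} I will prove by induction on $p$ that at the start of price $p$:
\begin{enumerate}[label=(\alph*)]
\item at least two bidders are active;
\item every active bidder has value at least $p-1$;
\item every inactive bidder has value at most $p-2$.
\end{enumerate}
Call a bidder who refuses at $p$ while active a \emph{refuser}. Given (b), each refuser has value exactly $p-1$. There are three cases for how the protocol ends at $p$.
\begin{itemize}
\item \emph{Only the highest-priority active bidder is left unasked.} Everyone else active refused, so he is the top-priority bidder among those with value at least $p-1$. By (a) some refuser exists, so $\ell(v)=p-1$.
\item \emph{Only the first-stage acceptor $j$ says yes.} He has value at least $p$, and every other active bidder refused at $p$ and so has value $p-1$. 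Hence $w(v)=j$. By (a) there is another active bidder, so $\ell(v)=p-1$.
\item \emph{$p=K$ and some bidder accepts in the second stage.} Two bidders have value $K$, and the second-stage order selects the one with highest priority. So the winner is $w(v)$ and he pays $K=\ell(v)$.
\end{itemize}
If the price rises, two bidders accepted at $p$, so (a)--(c) hold at $p+1$.

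\emph{Privacy for the winner.} In each terminal case, the winner's last answer is either \emph{yes} at the final price, or he is never asked at that price. Every value of his that yields the same winning outcome lies in the range where all his answers coincide: yes at all prices up to the last one asked. The other bidders' answers are unaffected by his value. Hence the whole transcript, including the announced histories, is unchanged.

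\emph{Credibility.} Take a safe run $D$ with mandatory sale and $\Rev(D,v)\ge\ell(v)$ everywhere. I will show that $\Rev(D,v)\le\ell(v)$ everywhere, which rules out strict improvement. The argument has two ingredients.
\begin{enumerate}[label=(\arabic*)]
\item \emph{What safety forces on the winner.} Take his explanation $\widehat v_{-i}$, against which he wins and pays $t$ under $P$. By the case analysis above, he was asked at price $t+1$ and accepted, or $t=K$. So to charge $t>\ell(v)$, the seller must actually send the winner a query at price $t+1>\ell(v)+1$, and a stated history in which at price $t$ a lower-priority first-stage acceptor and a higher-priority bidder both accepted.
\item \emph{What safety forces on losers.} Each loser must see a prefix of an honest execution for his own value. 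In that execution he is asked at $p$ only after a lower-priority bidder has accepted $p$, unless he is himself the first-stage bidder. He also keeps being asked until he refuses or the price stops.
\end{enumerate}

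I would then argue by induction on the seller's information sets in $D$, ordered by the number of answers received. The induction claim is: whenever $D$ sends a bidder a query at a price $q$ exceeding every price at which she has already observed a \emph{yes} from two distinct bidders, there is a profile consistent with all answers received so far at which this bidder refuses. At that profile, mandatory sale forces $D$ to sell. It must sell either to someone whose observations (by (1)) cap the payment at the highest price two bidders have accepted, or the seller must fabricate an acceptance for a loser, which (2) forbids. So revenue there is at most that capped price. I would then show this is strictly below $\ell$ at that alternative profile, or else equal to it with no gain anywhere on the branch. This contradicts weak improvement.

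\emph{Main obstacle.} The hard step is exactly this last one: choosing the alternative profile so that the seller's forced outcome is strictly below $\ell$ rather than merely equal to it, while also handling ties. The first-stage order, which runs from lowest priority upward, is what I expect to make this work. A refusal by the lowest-priority active bidder can be made price-setting at $p-1$, so a premature probe of a higher bidder loses revenue. The case $p=K$, where the protocol stops at the first \emph{yes}, needs a separate check.
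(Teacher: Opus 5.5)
Your implementation and winner-privacy arguments are sound. For privacy, say explicitly that in the first-stage-stop case the winner was last asked at $p-1=\ell(v)$, because every active bidder is queried at each price the protocol passes through. The credibility argument, however, has a genuine gap, and it starts with ingredient (1), which is false.

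In the terminal case where the first stage stops before asking the highest-priority active bidder, the winner pays $t=p-1$ but was last offered $t$, or nothing when $t=0$; he was never offered $t+1$. For example, at $(5,3,3)$ bidders 3 and 2 refuse 4, and bidder 1 wins at 3 without ever being offered 4. What such an observation certifies is a stated history in which every higher-priority rival has refused. The correct dichotomy is Lemma~\ref{lem:observable}. Either the payment equals the last offered price and the stated history records refusals by all higher-priority bidders, or the payment is one below it, that offer was the first acceptance at its price, and some higher-priority rival is unresolved.

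A deviating seller can therefore charge $\ell^*+1$ by sending the winner a single offer at $\ell^*+1$ with a fabricated history showing all higher-priority rivals refusing. Your conclusion that she must offer $t+1>\ell(v)+1$ misses exactly this case. Ingredient (2) cannot exclude it either. Histories stated to the winner are private seller messages, so answers falsely attributed to a rival are never seen by that rival, and loser-side consistency places no restriction on them.

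This missing case is the hard one, and your plan has no mechanism for it. When the winner's last offer is at $p\ge\ell^*+2$, your refuse-and-force-a-sale idea works. Lower the winner to $p-1$; the forced alternative winner must be a price-setter paying $\ell^*$. Raising him to $\ell^*+1$ leaves the run unchanged while the prescribed price rises.

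But when $p=r=\ell^*+1$, lowering the winner to $\ell^*$ yields only equality. The alternative winner, raised to $\ell^*+1$, still faces $w$ at $\ell^*$, so prescribed revenue stays $\ell^*$.

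The paper closes this case with a feasibility contradiction rather than a revenue one:
\begin{enumerate}
\item Set $w$ to $\ell^*$ and every other bidder to his largest accepted price before $T$. The path to $T$ is unchanged and $w$ refuses.
\item The forced winner $j$ must then pay exactly $\Lambda=\lambda_j$ and receive no later offer. By Lemma~\ref{lem:observable}(a), his observation before $T$ therefore records refusals by all his higher-priority rivals.
\item Starting from $v^*$, raise both $w$ and $j$ to $K$. The path through $T$ is preserved, and each now holds a record that, for a bidder with value $K$, forces every honest explanation to make him the winner. Safety would then require two winners.
\end{enumerate}
Without the observability dichotomy and this two-winner construction, or some substitute, your step ``strictly below $\ell$, or else equal with no gain anywhere on the branch'' remains unproven.
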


The next lemma relates the winner's payment to his queries and the history stated in them. Write $x$ for a payment and $p=0$ when the winner receives no query; this convention represents a zero payment without an actual offer at zero.

\begin{lemma}[Payment observability]\label{lem:observable}
Let $O$ be an honest observation in which bidder $i$ wins and pays $x$, and let $p$ be his last offered price, or zero if he receives no query. He accepts every offer, and one of the following holds:
\begin{enumerate}[label=(\alph*)]
\item $x=p$, and the history stated in $O$ records a refusal by every higher-priority bidder.
\item $x=p-1$, his last answer is the first acceptance at price $p$, and at least one higher-priority bidder has not yet refused in the history stated to $i$ with that query.
\end{enumerate}
\end{lemma}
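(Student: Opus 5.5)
The plan is to trace an honest execution of the ascending protocol, keeping track of how the winner's payment, his last offered price $p$, and the history stated in his final query are related. The stopping rules (i)–(iii) are the only places where the protocol ends, so I will enumerate them.

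\emph{The winner accepts every offer.} In an honest run a refusal makes a bidder inactive, and he receives no further queries. Inactive bidders are never named winners. In stage (i) the winner is the last remaining highest-priority active bidder. In stage (ii) at $p<K$ the winner is the first-stage acceptor. At $p=K$ the winner is the first bidder to say yes. In each case the winner never refused. So every answer in his observation $O$ is \emph{yes}.

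I will then split into the three terminations.

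\emph{Termination in stage (i) at price $p$.} Bidder $i$ is the only active bidder not yet asked at $p$, and he is the highest-priority active bidder. Every lower-priority active bidder has refused at $p$. Since bidders are queried in order, the higher-priority bidders were inactive before price $p$. He pays $p-1$.
\begin{itemize}
\item If $i$ received a query at $p-1$, it was his last, so $x=p-1$ equals his last offered price. The history stated with that query, together with the invariant that every bidder of higher priority than $i$ was already inactive by then, gives (a). This invariant needs checking: higher-priority bidders could only have become inactive at $p-1$ after $i$'s query. I will handle this by noting that at $p-1$, if $i$ was queried in stage (i), the higher-priority bidders came after him. The case then needs refinement, and may instead fall under (b).
\item If $i$ was never queried, then $p=1$, $x=0$, and the convention $p=0$ applies with all others refusing. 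The stated history is vacuous in this case, so I must check which clause the convention supports.
\end{itemize}

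\emph{Termination in stage (ii) at $p<K$ with all answering no.} The first-stage acceptor $i$ pays $p-1$, and his last query is his acceptance at $p$. That acceptance is the first at price $p$, because stage (i) stops at the first yes. When he was queried, the higher-priority active bidders, who are asked only in stage (ii), had not refused at $p$. At least one such bidder exists, since at least two bidders are active at the start of each price and $i$ is not the highest-priority one (otherwise stage (i) would have stopped). This gives (b).

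\emph{Termination at $p=K$ with the first yes in stage (ii).} The winner pays $K=p$. Every higher-priority bidder was asked before him in stage (ii) and refused, and this is recorded in the history stated with his query. This gives (a).

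\emph{Main obstacle.} The main difficulty is the stage (i) stop. There I must track exactly when $i$ last received an offer and what history that offer stated, in order to decide between (a) and (b) and to reconcile the result with the $p=0$ convention. I would first enumerate where $i$ could have been queried at $p-1$, in stage (i) or stage (ii), and use the invariant that the price rises only after the first-stage acceptor and a higher-priority bidder both accept.
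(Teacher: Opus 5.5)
\textbf{There is a genuine gap.} Your handling of the two second-stage terminations matches the paper's argument. However, the first-stage stop is the only nontrivial case, and you leave it unresolved. Part of the trouble is notational: you use $p$ both for the stopping price and for the lemma's last offered price. Write $q$ for the stopping price, so that $x=q-1$.

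Your tentative conclusion is also internally inconsistent. You yourself compute that $x$ equals his last offered price, so (b), which requires $x=p-1$, cannot hold. The case must fall under (a), not ``may instead fall under (b)''.

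The missing argument runs as follows.
\begin{enumerate}
\item If $q=1$, every bidder is still active, so the bidder left unasked is bidder~1. Thus $i=1$ and $p=x=0$, and (a) holds vacuously because $i$ has no higher-priority rivals.
\item If $q>1$, price $q-1<K$ did not end the protocol. Hence its first stage found an acceptor, and its second stage asked all remaining active bidders. So every bidder active at $q-1$ was queried there, including $i$, and $p=q-1=x$. This also justifies your unproved dichotomy ``queried at $q-1$ or never queried''.
\item You must then exclude that $i$ was the first-stage acceptor at $q-1$. In that case the price could rise to $q$ only if some higher-priority bidder accepted in the second stage at $q-1$. That rival would still be active at $q$, so the first stage at $q$ would query $i$ before him rather than stop before asking $i$.
\item Hence $i$'s last query was in the second stage at $q-1$. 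That stage asks every higher-priority active bidder before him. Each of them refused there, because $i$ is the highest-priority active bidder at $q$. Those refusals, together with earlier ones, appear in the history stated with his last query, which gives (a).
\end{enumerate}
Your closing paragraph names exactly the right invariant: the price rises only after the first-stage acceptor and a higher-priority bidder both accept. But the proposal never uses it to close the case.

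\textbf{A smaller omission.} Your case list restricts the all-refuse termination to $p<K$. At $p=K$ the second stage can also end with every remaining bidder refusing. Then the first-stage acceptor wins and pays $K-1$. Your argument for (b) applies verbatim there, but the case needs to be listed.
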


\begin{proof}
A refusal makes a bidder inactive and unable to win, so $i$ accepts every query he receives, including his last one. Fix a profile generating $O$ and consider how the protocol awards him the object.

Suppose it stops in the first stage at price $q$, before querying $i$. He is then the sole remaining active bidder and pays $q-1$. If $q=1$, then $i=1$ and $p=x=0$, giving (a). Otherwise, his last query was at $q-1$, so $p=x=q-1$. He could not have been the first acceptor there: the protocol would then either have stopped at $q-1$ or retained a higher-priority bidder. In the latter case, $i$ would be queried before that rival at $q$, contrary to stopping before querying $i$. Hence his last query was in the second stage, after every higher-priority active bidder had refused. Those and earlier higher-priority refusals appear in his last query, giving (a).

Suppose instead that $i$ wins after all second-stage bidders refuse at price $q$. He accepted in the first stage at $q$, so $p=q$ and $x=q-1$. A higher-priority bidder was active when he was queried; otherwise the protocol would have stopped before asking him. That rival refuses after $i$'s last answer, so this refusal is absent from the history stated to $i$, giving (b).

Finally, suppose $i$ wins by accepting in the second stage at $K$. Then $p=x=K$. Every higher-priority bidder has already refused, and those refusals appear in his last query, giving (a).
\end{proof}

\begin{proof}[Proof of Proposition~\ref{prop:scan}]
\emph{Implementation.} If the first stage stops at $q$ with winner $i$, all rivals have values at most $q-1$, some with equality. Bidder $i$ has value at least $q-1$ and priority over every rival at that value, so $i=w(v)$ and $\ell(v)=q-1$. If the second stage stops after all remaining bidders refuse, the winner's value is at least $q$, while rivals' values are at most $q-1$, some with equality; thus the outcome is again correct. If $i$ accepts at $K$ in the second stage, he and the first acceptor have value $K$. All higher-priority bidders have refused, so $i=w(v)$ and $\ell(v)=K$.

\emph{Credibility.} Suppose a safe run $D$ with mandatory sale satisfies $\Rev(D,v)\ge\ell(v)$ at every $v$ and $\Rev(D,v^*)=r>\ell^*:=\ell(v^*)$ at some $v^*$. Write $w=w(v^*)$. Safety implies that a winner pays at most his value, a bidder who refuses cannot win, and each bidder's offered prices strictly increase. Determinism implies that identical answers generate identical subsequent queries and outcomes.

Only $w$ has value at least $r$, so he wins under $D$ at $v^*$ and accepts every offer. Let $p$ be his last offered price and $T$ the point when that query is sent. Lemma~\ref{lem:observable} gives $r\in\{p,p-1\}$. We consider two cases.

\emph{Case 1: $p\ge\ell^*+2$.} In case (a) of the lemma, $r=p\ge\ell^*+2$; in case (b), $r=p-1\ge\ell^*+1$. Let $v'=(p-1,v^*_{-w})$. Bidder $w$ remains the unique highest bidder and $\ell(v')=\ell^*$. All his offers before $T$ are at most $p-1$, so all answers before $T$ are unchanged. At $T$ he refuses price $p$ and cannot win. Mandatory sale requires another winner $j$, whose payment satisfies
\[
\ell^*\le\Rev(D,v')\le v^*_j\le\ell^*.
\]
Thus $v^*_j=\ell^*$ and $j$ pays $\ell^*$. He accepted every offer, all at prices at most $\ell^*$. Raise only $j$'s value in $v'$ to $\ell^*+1$. Every answer remains unchanged, so $D$ follows the same path and charges the same payment $\ell^*$. But $w$ and $j$ now both have values at least $\ell^*+1$, making the prescribed revenue at least $\ell^*+1$, a contradiction.

\emph{Case 2: $p=\ell^*+1$ and $r=p$.} Lemma~\ref{lem:observable}(a) implies that the history stated to $w$ at $T$ records a refusal by every higher-priority bidder. For each $i\ne w$, let $\lambda_i\le\ell^*$ be his largest accepted price before $T$ at $v^*$, or zero if he has accepted none. Set $\Lambda=\max_{i\ne w}\lambda_i$ and define $\bar v_w=\ell^*$ and $\bar v_i=\lambda_i$ for $i\ne w$. All answers before $T$ are unchanged: each $i\ne w$ accepted only prices at most $\lambda_i$ and refused only prices above $v^*_i\ge\lambda_i$, while $w$ received only offers at most $\ell^*$. Thus, also at $\bar v$, the same sequence reaches $T$ and $D$ sends $w$ the same query at price $p$. He refuses and cannot win, while $\ell(\bar v)=\Lambda$.

Mandatory sale gives a winner $j\ne w$. His payment $x$ satisfies
\[
\Lambda\le x\le\bar v_j=\lambda_j\le\Lambda,
\]
so $x=\lambda_j=\Lambda$. He accepted every offer. His last accepted price before $T$ is $\lambda_j$, with the zero convention if he received no query. Any later offer would exceed $\lambda_j$ and be refused, so he receives none. Lemma~\ref{lem:observable}(a) therefore implies that his observation before $T$ records a refusal by every higher-priority rival.

Now let $u$ agree with $v^*$ except that $u_w=u_j=K$. Both bidders accepted every offer before $T$, so all answers up to $T$ remain unchanged. The seller sends $w$ the same query at $T$, which he accepts. Both bidders' observations record refusals by all their higher-priority rivals. For a bidder with value $K$, such a record excludes every higher-priority rival from having value $K$ in any explanation. Every honest observation extending that record must therefore make him the winner. Safety at $u$ would require both $w$ and $j$ to win, contradicting feasibility.

Both cases lead to contradictions, so no safe run with mandatory sale weakly raises revenue everywhere and strictly somewhere.

\emph{Winner privacy.} Fix $v$ with winner $w=w(v)$, and let $p$ be his last offered price, or zero if he receives no query. In case (a) of Lemma~\ref{lem:observable}, $p=\ell(v)$. In case (b), $p=\ell(v)+1$ and a higher-priority rival has value $\ell(v)$, so $w$ can win only with a value at least $\ell(v)+1$. Every alternative winning value therefore gives the same answers to his queries. With opponents' values fixed, the full transcript and outcome are unchanged, as required by contextual privacy restricted to winners.
\end{proof}

\section{Proof of Theorem~\ref{thm:cost}}\label{app:count}
\begin{proof}
Fix $S\subseteq N$ with $|S|=3$ and $a\in\{0,\ldots,K-2\}^{N\setminus S}$. Consider all profiles $v$ with $v_{-S}=a$. As values in $S$ vary, the efficient winner changes, so some member of $S$ must receive an informative query. Let $i$ receive the first such query and let $h$ be the complete sequence of queries and answers strictly before it. Every profile with $v_{-S}=a$ generates $h$: earlier queries to members of $S$ have constant response maps, and outsiders' values are fixed.

Suppose this query gives the same answer for all values below $K$; its other answer identifies $v_i=K$. Construct a profile $\hat v$ with $\hat v_{-S}=a$ and $\hat v_i=K$. Give a lower-priority member of $S$ value $K$ if one exists; otherwise give a higher-priority member value $K-1$. Give the remaining member value zero. Bidder $i$ wins at $\hat v$, which reaches $h$, and pays $\hat p:=\ell(\hat v)\ge K-1$.

Use the same construction as Lemma~\ref{lem:replace}. Follow $P$ through $h$ and this query, continuing to follow $P$ on every other path or answer. If the answer identifies $v_i=K$, simulate the remaining queries of $P$, asking other bidders and computing $i$'s answers. If $i$ loses or would pay at least $\hat p$, send him the recorded queries and implement $F(v)$. Otherwise, send him his remaining queries from $\hat v$ and charge $\hat p$. Other bidders retain their queries and outcomes under $P$.

The explanatory profile $\hat v$ matches $i$'s observations through the identifying answer, even if different outsiders' values also generate $h$. Thus the run is safe and never reduces revenue. It strictly increases revenue when $i$ has value $K$, the other members of $S$ have value zero, and outsiders have values $a$: the prescribed price is at most $K-2<\hat p$. This contradicts credibility.

The query must therefore distinguish two values below $K$. For each $v_i\in\{0,\ldots,K-1\}$, set both other members of $S$ to $K$ and retain $a$ outside $S$. Some alternative $v'_i<K$ gives a different answer. Both values lie below $\ell(v)=K$ and give the same winner and payments, but their transcripts differ at this query. Hence each of these $K$ bidder-profile pairs belongs to $\Gamma_L(P)$.

Every triple thus contains a bidder who is not globally protected. Each choice of $(S,a)$ contributes $K$ distinct violations. There is no double counting: the profile has exactly two bidders at $K$; those bidders and the marked bidder $i$ determine $S$, and the profile then determines $a$ and $v_i$. The $\binom n3$ triples and $(K-1)^{n-3}$ choices of $a$ yield \eqref{eq:count}.
\end{proof}

\clearpage
\supplementsetup
\begin{center}\LARGE\bfseries Supplemental Appendix\end{center}\vspace{6pt}
\noindent This Supplemental Appendix collects results referred to in the main text: the standard ascending clock and its credibility under mandatory sale (Section~\ref{sa:clock}); the extension of the impossibility to general efficient, incentive-compatible payment rules, with a three-value example under mandatory sale (Section~\ref{sa:general}); and the common-record implementation discussed in the Conclusion (Section~\ref{sa:record}). Results stated here are numbered with the prefix S; unprefixed numbers refer to the main text. Throughout, $\ell(v)=\max_{j\ne w(v)}v_j$ denotes the highest losing value, $n\ge 3$, and lower indices have higher priority.

\section{The Standard Clock and the Clinching Price}\label{sa:clock}

This section describes the standard ascending clock referred to in Appendix~\ref{app:ascending}. On $V=\{0,\ldots,K\}$, write $w=w(v)$ and define the clinching price by
\[
R(v)=\min\{\bar v_w\in V:w(\bar v_w,v_{-w})=w\}
=\max\bigl(\{0\}\cup\{v_j+1:j<w\}\cup\{v_j:j>w\}\bigr).
\]
Thus $R(v)=\ell(v)+1$ when a price-setter has priority over the winner, and $R(v)=\ell(v)$ otherwise. The rule charging $R$ is normalized, efficient, DSIC, and ex post individually rational. The argument for Theorem~\ref{thm:main} also applies: a top bidder pays at most $a_2$ against bottom opponents and can pay $a_m$ against suitably chosen opponents.

The standard clock asks active bidders, in decreasing order of their indices, whether their values are at least $p$, for $p=1,\ldots,K$. A refusal permanently excludes a bidder. The clock stops when one bidder remains; if several remain after price $K$, the highest-priority bidder wins. The winner pays his last accepted offer, or zero if he received no query. This equals $R(v)$: he must accept one step above the highest losing value precisely when a price-setter has higher priority. More generally, consider ascending implementations of $R$ in which each bidder's thresholds strictly increase, a refusal excludes him from winning, and the winner pays his last accepted offer, with the same zero convention. As in Appendix~\ref{app:ascending}, mandatory sale requires every run to allocate the object.

\begin{proposition}[Binding offers]\label{prop:sa-ascending}
Under mandatory sale, every ascending implementation of $R$ with these properties is prior-free credible. It also leaves the full transcript unchanged when the winner's value changes to another value giving him the same winning outcome.
\end{proposition}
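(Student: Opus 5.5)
The plan mirrors the credibility half of Proposition~\ref{prop:scan}, which becomes simpler here because the winner pays exactly his last accepted offer $p$ (or zero with no query). I would first record the consequences of safety for any ascending implementation with the stated properties. A bidder who refuses any offer cannot win. Each bidder's thresholds strictly increase. A winner's payment equals his last accepted offer, so he pays at most his value. Since the seller is deterministic, identical answers yield identical continuations.

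Suppose, for contradiction, that a safe run $D$ with mandatory sale satisfies $\Rev(D,v)\ge R(v)$ everywhere and $\Rev(D,v^*)=r>R(v^*)$ at some $v^*$. I would first show that the winner under $D$ at $v^*$ is $w=w(v^*)$. Any winner $j$ pays at most $v^*_j$. If $j\ne w$, then $v^*_j\le R(v^*)$: either $j>w$, in which case $v^*_j\le R(v^*)$ directly, or $j<w$, in which case $v^*_j<v^*_w$ and $v^*_j+1\le R(v^*)$. Hence $j=w$. By safety, $w$'s last offer is $r$, with $r\ge R(v^*)+1\ge 1$. Let $T$ be the moment that offer is sent.

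The main step is to lower $w$'s value to $v'_w=r-1$. Because $r-1\ge R(v^*)$, bidder $w$ is still the efficient winner at $v'=(r-1,v^*_{-w})$, and $R(v')=R(v^*)$. All of $w$'s offers before $T$ are at most $r-1$, so every answer before $T$ is unchanged. At $T$ he refuses and therefore cannot win. Mandatory sale then forces another winner $j$, whose payment $x$ satisfies $R(v^*)\le \Rev(D,v')=x\le v^*_j$. Combining this with the bound $v^*_j\le R(v^*)$ from the previous paragraph, we get $x=v^*_j=R(v^*)$, and this is possible only if $j>w$.

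Next I would raise $j$'s value to $v^*_j+1$. All of $j$'s accepted offers are at most $x=v^*_j$. He receives no refused offer, because a refusal would exclude him from winning. Hence all of his answers are unchanged, and $D$ still charges $j$ the amount $x$. At the new profile, $w$ (value $r-1\ge x$) and $j$ (value $x+1$) are both present, so $R\ge x+1$. This holds whether $w$ or $j$ is now the efficient winner: if $j$ wins, $w<j$ and $w$'s value $r-1$ is at least $x$, so $R\ge x+1$ again. This contradicts $\Rev(D,\cdot)\ge R$.

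I expect the delicate point to be the case $r-1=x$ after the raise. There $j>w$ and the two bidders tie at $x+1$, so $w$ remains the efficient winner. The bound still follows because $R\ge v_j$ for $j>w$, giving $R\ge x+1$.

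Winner privacy follows as in Proposition~\ref{prop:scan}. The winner's offers are all at most $R(v)$, since he pays his last accepted offer and that payment equals $R(v)$ under honest play. Any alternative winning value is at least $R(v)$, so it yields the same answers to all of his queries. With opponents' values fixed, the full transcript is therefore unchanged.
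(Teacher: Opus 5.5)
Your proposal is correct and follows essentially the same route as the paper's proof: show that the deviating payment $r$ must come from $w$, lower his value to $r-1$ so that mandatory sale forces a lower-priority price-setter $j>w$ to pay exactly $R$, raise $j$ to $R+1$ without changing the run, and obtain winner privacy from the fact that all of the winner's offers are accepted and at most $R(v)$. Two small touch-ups: note that $x+1\le r\le K$, so the raised value lies on the grid; and in your closing remark the two bidders tie at $x+1$ only when $r-1=x+1$ (then $w$ wins and $R\ge v_j$), whereas for $r-1=x$ bidder $j$ wins outright and the bound comes from $R\ge v_w+1$, exactly as your preceding paragraph already shows.
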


This conclusion concerns prior-free credibility under mandatory sale; it does not assert credibility under every prior.

\begin{proof}
In a safe run, any winner must have refused no price and must pay his largest accepted price, or zero if he receives no query. These restrictions hold in every honest own observation. In particular, his payment cannot exceed his value.

Suppose a safe run $D$ with mandatory sale weakly raises revenue everywhere and strictly raises it at $v$. Let $w=w(v)$, $\ell=\ell(v)$, and $R=R(v)$. A deviating payment $r>R$ must be made by $w$: every other bidder has value at most $\ell\le R$. Bidder $w$ therefore accepted the offer $r$.

Lower his value to $r-1$. Since $r-1\ge R$, he remains the efficient winner and the prescribed payment remains $R$. All answers before his offer $r$ are unchanged, because his own thresholds increase. He now refuses $r$, so safety prevents him from winning. Mandatory sale requires another winner, who can pay at most $\ell$. Weak revenue improvement therefore requires $\ell=R$ and a price-setter $j$ who pays $\ell$ and has accepted no offer above $\ell$ and refused none.

Raise this bidder's value from $\ell$ to $\ell+1$. The run of $D$ is unchanged: every question it asked $j$ had threshold at most $\ell$, and all other values stay fixed. It still collects $\ell$. The honest critical price, however, is at least $\ell+1$. If the modified value $r-1$ of $w$ is at least $\ell+1$, the two bidders $w,j$ have values at least $\ell+1$. If $r-1=\ell$, then $R=\ell$ implies that $w$ has priority over $j$, so $j$ must bid $\ell+1$ to beat $w$. Either way $D$ now earns less than the rule, a contradiction. The changed value belongs to the grid because $\ell=R<r\le K$.

For winner privacy, fix an honest winner with payment $R$. His largest accepted threshold is $R$ and he has refused none. Any alternative winning type is at least $R$ and gives the same answers to all his questions. Keeping opponents fixed therefore preserves the entire path and outcome. Conversely, every type at least $R$ wins and has the same critical payment.
\end{proof}

\begin{remark}[Why the order matters]\label{rem:sa-order}
Charging $\ell(v)$ instead of $R(v)$ in the standard clock is not credible even under mandatory sale. A winner may have accepted his final offer while a higher-priority rival remains unresolved in his observation. For example, on $V=\{0,1\}$ with three bidders at $(0,1,1)$, bidder 2 accepts 1 before bidder 1 refuses, then wins and pays 1. The seller can give bidder 2 the same query when bidder 3 has actually refused, and charge 1 if bidder 1 also refuses. The payment-observability lemma (Lemma~\ref{lem:observable}) fails for this order. Appendix~\ref{app:ascending} instead queries every higher-priority bidder before the winner whenever a lower-priority bidder has accepted. Case~1 in the proof of Proposition~\ref{prop:scan} then rules out charging a first-stage acceptor his last accepted offer.
\end{remark}

\paragraph{Why mandatory sale matters at zero.}
Allowing withholding invalidates the credibility claim for the zero-based protocol in Appendix~\ref{app:ascending}. To see this, take three bidders and $K\ge2$. Follow that protocol unless bidders 3 and 2 both refuse price 1. On this branch both have value zero. Instead of giving bidder 1 the object at price zero, ask him at price 1 with a stated history in which bidder 3 accepted 1. If bidder 1 accepts, allocate to him at price 1; otherwise withhold the object. Bidders 2 and 3 retain observations explained by a positive value of bidder 1. Bidder 1's observation, including either outcome, is explained by the honest profile $(v_1,0,1)$. The run is therefore safe. Its revenue equals the prescribed revenue on other branches, rises from zero to 1 when $v_1\ge1$ on this branch, and remains zero when $v_1=0$. Thus mandatory sale is essential for this construction on a grid containing zero. The main impossibility results do not require it.

\section{General Incentive-Compatible Payments}\label{sa:general}

On a discrete type space, normalized, efficient, DSIC, ex post individually rational rules need not have unique payments. This section extends Theorem~\ref{thm:main} to such rules when their outcomes ignore lower-tail changes. Throughout, $F=(w,t_1,\ldots,t_n)$ is such a rule with the efficient allocation $w$ of Section~\ref{sec:model}, and a loser $i$ is a low loser at $v$ if $v_i<\ell(v)$.

For these rules the protected range is still defined by $\ell(v)=\max_{j\ne w(v)}v_j$, the highest losing value, which depends only on the allocation. Efficiency and DSIC place the winner's payment between the grid neighbors of $\ell(v)$ (see \eqref{eq:sa-interval} below), so a losing value strictly below $\ell(v)$ can affect the outcome only through the rule's own use of lower-tail information.\footnote{Defining the lower tail by the winner's payment $t_{w(v)}(v)$ instead of $\ell(v)$ would narrow the class: together with DSIC, outcome invariance below the payment forces $t_{w(v)}(v)\le\ell(v)$ whenever $\ell(v)>a_1$, because the price-setters with priority over the winner can then be lowered to $a_1$ one at a time without changing the outcome, and once none of them remains the winner's smallest winning type is at most $\ell(v)$. This excludes the clinching price of Section~\ref{sa:clock}, under which such a price-setter lies strictly below the price. It would also weaken Lemma~\ref{lem:first}, since the payment at the profile used there could be $a_{m-1}$.} We use an outcome-conditioned extension of Definition~\ref{def:privacy}: for every $v\in V^n$, loser $i\ne w(v)$, and $v'_i\in V$, require
\begin{equation}\label{eq:sa-privacy-general}
F(v'_i,v_{-i})=F(v)\ \text{ and }\ \max\{v_i,v'_i\}<\ell(v)\quad\Longrightarrow\quad z^P(v'_i,v_{-i})=z^P(v).
\end{equation}
To ensure that all comparisons in this range leave the outcome unchanged, impose \emph{lower-tail outcome invariance}:
\begin{equation}\label{eq:sa-invariance}
F(v)=F(v'_i,v_{-i})\quad\text{for every low loser $i$ and every $v'_i<\ell(v)$.}
\end{equation}
Numerical second pricing satisfies \eqref{eq:sa-invariance} automatically; for general rules it is an additional assumption. It makes the outcome-equality qualification in \eqref{eq:sa-privacy-general} redundant, recovering Definition~\ref{def:privacy}. For any efficient rule satisfying \eqref{eq:sa-invariance}, the first-query proof of Lemma~\ref{lem:first} therefore applies.

\Needspace{7\baselineskip}
\begin{theorem}[General incentive-compatible payments]\label{thm:sa-general}
Let $m\ge 4$. No private query protocol implementing a normalized, efficient, DSIC, ex post individually rational rule satisfying lower-tail outcome invariance \eqref{eq:sa-invariance} can both satisfy the privacy condition \eqref{eq:sa-privacy-general} and be prior-free credible. On the equally spaced grid $G_m$ of Proposition~\ref{prop:grids}, there is a safe pointwise revenue improvement of at least $U-L-2\Delta$ at some profile.
\end{theorem}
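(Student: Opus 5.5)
The plan is to rerun the two-step argument behind Theorem~\ref{thm:main}. The second-price formula enters that argument at exactly one point: the computation of bidder $i$'s payments when he has value $a_m$. I will replace that computation with bounds derived from DSIC and ex post individual rationality.

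\emph{Step 1: the first informative query.} As noted before the statement, lower-tail outcome invariance \eqref{eq:sa-invariance} makes the outcome-equality clause in \eqref{eq:sa-privacy-general} redundant, so the proof of Lemma~\ref{lem:first} carries over. Let $i$ receive the first informative query. Put two opponents at $a_m$ and the remaining opponents at $a_1$. Then $\ell(v)=a_m$, and every value $v_i<a_m$ makes $i$ a low loser. By \eqref{eq:sa-invariance}, all these values give the same outcome. By \eqref{eq:sa-privacy-general}, they give the same transcript, and hence the same answer to the query. So the query has one answer shared by all values below $a_m$ and another answer identifying $a_m$.

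\emph{Step 2: the replacement run.} The construction in Lemma~\ref{lem:replace} uses only three facts: the protocol implements $F$, losers pay zero, and observations are private. It never uses the formula for $t_i$. It therefore yields a safe run that charges $M_i=\max_{u_{-i}\in W_i}t_i(a_m,u_{-i})$ whenever $v_i=a_m$ and $i$ wins, and otherwise reproduces $F$. Pointwise, revenue weakly rises. It rises strictly at any profile where $i$ wins with value $a_m$ and pays less than $M_i$.

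\emph{Step 3: the payment gap.} This is the only step that needs new work, and the main obstacle is obtaining usable bounds from DSIC alone, since payments are not pinned down.

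For the upper bound, take opponents all at $a_1$. Value $a_2>a_1$ wins there, so ex post individual rationality gives $t_i(a_2,\cdot)\le a_2$. DSIC for type $a_m$ rules out paying more than the report $a_2$ would cost while winning all the same, so $t_i(a_m,\cdot)\le a_2$.

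For the lower bound, construct a profile at which $i$ wins with value $a_m$ but would lose with value $a_{m-1}$:
\begin{itemize}
\item if $i<n$, put some $j>i$ at $a_m$ and everyone else at $a_1$;
\item if $i=n$, put bidder $1$ at $a_{m-1}$ and everyone else at $a_1$.
\end{itemize}
In both cases type $a_{m-1}$ loses and gets zero. DSIC then requires $a_{m-1}-t_i(a_m,\cdot)\le 0$, so the payment is at least $a_{m-1}$.

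Since $m\ge4$, we have $a_{m-1}>a_2$. So bidder $i$ pays different amounts at the two profiles, and Lemma~\ref{lem:replace} rules out prior-free credibility. On $G_m$, the gain at the all-$a_1$ profile is at least $a_{m-1}-a_2=U-L-2\Delta$.
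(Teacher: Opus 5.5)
Your proposal is correct and follows the paper's proof essentially step for step: Lemma~\ref{lem:first} carried over via outcome invariance, the replacement run of Lemma~\ref{lem:replace}, and DSIC/individual-rationality bounds giving a top-type payment of at most $a_2$ against bottom opponents and at least $a_{m-1}$ against a suitably placed rival. The only difference is cosmetic: you split cases as $i<n$ versus $i=n$ rather than the paper's $i=1$ versus $i>1$, and this split works equally well.
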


\begin{proof}
Fix a bidder and his opponents. DSIC makes payments constant across winning reports: comparing any two winning reports in both directions gives equality. If $b$ is the smallest winning type and $c$ the largest losing type, normalized DSIC implies
\begin{equation}\label{eq:sa-interval}
c\le t_i\le b.
\end{equation}
The lower bound prevents type $c$ from profiting by winning; the upper bound prevents type $b$ from profiting by losing. If all reports win, ex post individual rationality gives $t_i\le a_1$.

Give bidder $i$ value $a_m$. Against bottom opponents, his payment is at most $a_1$ if $i=1$ and at most $a_2$ otherwise. If $i=1$, a lower-priority opponent at $a_m$ makes every report at most $a_{m-1}$ lose, so his winning payment is at least $a_{m-1}$. If $i>1$, a higher-priority opponent at $a_{m-1}$ gives the same lower bound. Since $a_{m-1}>a_2$ when $m\ge 4$, every bidder has nonconstant top-type winning payments. The first-query argument of Lemma~\ref{lem:first}, together with Lemma~\ref{lem:replace}, gives a safe pointwise revenue improvement. On the equally spaced grid, the difference between these high- and low-opponent bounds is at least $(U-\Delta)-(L+\Delta)=U-L-2\Delta$.
\end{proof}

Under mandatory sale, the four-value bound cannot be lowered, as the following example shows.

\begin{example}[A three-value example]\label{ex:sa-three}
On $V=\{0,1,2\}$, allocate efficiently. A winning bidder $i>1$ pays 1. A winning bidder 1 pays 1 if an opponent has value 2, and zero otherwise. Losers pay zero. This normalized, efficient, DSIC, individually rational rule satisfies lower-tail outcome invariance and has a lower-tail contextually private implementation that is prior-free credible under mandatory sale.
\end{example}

The rule is lower-tail outcome invariant: replacing a low loser's value by another value below $\ell(v)$ leaves the winner and every payment unchanged. Implement it with these private queries, where a query at $p$ asks whether the recipient's value is at least $p$:
\begin{enumerate}[label=\arabic*.]
\item Ask bidders $n,n-1,\ldots,2$ at 2, including those following an affirmative answer.
\item If any answer is affirmative, ask bidder 1 at 2. Allocate to the highest-priority bidder answering affirmatively at 2, and charge 1.
\item If all initial answers are negative, ask bidder 1 at 1. An affirmative answer gives him the object at payment zero.
\item Otherwise, ask bidders $2,3,\ldots,n$ at 1, stopping at the first affirmative answer. That bidder wins and pays 1. If every answer is negative, bidder 1 wins and pays zero.
\end{enumerate}

\begin{proof}[Verification of Example~\ref{ex:sa-three}]
If any value is 2, the protocol identifies the highest-priority such bidder and charges 1, except when bidder 1 alone has value 2: he wins in the third step at payment zero. Otherwise it selects the highest-priority value-1 bidder, or bidder 1 if all values are zero. Thus it implements the rule.

Fix opponents' reports. Normalized DSIC requires a constant payment across winning reports, between the largest losing report and the smallest winning report whenever both exist. For bidder 1, opponents' maxima 0, 1, and 2 give smallest winning reports 0, 1, and 2; his prescribed payments 0, 0, and 1 satisfy these conditions, including individual rationality when all reports win. For $i>1$ with a winning report, the smallest winning report is 1 or 2, and payment 1 belongs to both intervals $[0,1]$ and $[1,2]$. A bidder without a winning report always pays zero. Thus the rule is DSIC and individually rational.

For lower-tail privacy, when $\ell(v)\le1$, the strict lower tail contains at most one value. When $\ell(v)=2$, at least two bidders have value 2. A low loser's values 0 and 1 both give negative answers at 2; some other bidder among $2,\ldots,n$ has value 2, so the protocol takes the second step and never asks that low loser at 1. The protected transcripts coincide.

For credibility, bidder 1 receives exactly one honest query. At price 2, an affirmative answer entails winning at payment 1, and a negative answer entails losing. At price 1, an affirmative answer entails winning at payment zero; a negative answer permits either losing or winning at zero. Every other bidder who accepts price 1 necessarily wins and pays 1; these bidders never win with value zero. Bidder 1 always wins with value 2. Safety imposes all these restrictions on every completed deviating observation.

No bidder $j>1$ can accept price 1 in a safe run before bidder 1 is contacted. Otherwise, changing the uncontacted bidder's value to 2 preserves that prefix, and safety requires both $j$ and bidder 1 to win, contradicting feasibility.

Suppose a safe run with mandatory sale weakly increases revenue everywhere and strictly increases it at $v$. Safe winning payments are at most 1, so prescribed revenue at $v$ is zero: bidder 1 wins and opponents have values at most 1. If $v_1=0$, all values are zero, and no safe payment can raise revenue. Thus $v_1\ge1$. Raising revenue requires bidder 1 to pay 1 or another bidder to win, both of which require bidder 1's sole query to be at 2.

Before that query, all answers at 2 are negative, and the preceding no-contact argument excludes affirmative answers at 1. The same prefix therefore occurs at $\bar v=(1,0,\ldots,0)$. Bidder 1 refuses at 2 and must lose. Every other bidder has value zero and cannot safely win, contradicting mandatory sale.
\end{proof}

The example uses the intersection of the DSIC intervals $[0,1]$ and $[1,2]$. Adding a positive constant to every value and every winning payment yields the corresponding example when withholding is allowed: the final contradiction above becomes a strict revenue loss from withholding. Hence the bound $m\ge4$ is sharp for the general class of nonnegative finite value sets in Theorem~\ref{thm:sa-general}. On the zero-based grid itself, the stated example uses mandatory sale. None of these claims changes the second-price impossibility with three values in Theorem~\ref{thm:main}.

Outcome invariance has substantive content. DSIC allows the winner's payment, holding opponents fixed, to be any constant in \eqref{eq:sa-interval}, and different opponents' profiles may select different points of this interval. A rule can therefore assign distinct prices to different lower losing reports without changing the reporting bidder's own allocation or payment. Whenever the prices differ, $F$ changes and the premise of \eqref{eq:sa-privacy-general} fails: the outcome-conditioned criterion no longer forbids distinguishing the corresponding reports, and an injective payment code can make it vacuous. In the terminology of \citet{HauptHitzig2026}, \eqref{eq:sa-invariance} together with \eqref{eq:sa-privacy-general} is their individual contextual-privacy criterion on the strict lower tail, that is, contextual privacy plus lower-tail nonbossiness (their Proposition 8): a low loser's own outcome is always no object and zero payment, and his report may not change anyone else's payment.

Credibility imposes a further necessary restriction: whenever the same bidder wins with the same complete own query--answer sequence, his payment must be the same, for otherwise the seller can follow the protocol honestly and replace only the final payment by the largest honest payment compatible with his observation; this is the finite deterministic version of the extensive pay-as-bid principle of \citet[Theorem 5]{AkbarpourLi2020}. Thus, if a lower losing value changes the winning payment, a credible implementation must also change that winner's own messages. Whether the outcome-conditioned requirement \eqref{eq:sa-privacy-general} alone implies the general impossibility without outcome invariance is not resolved here.

\Needspace{12\baselineskip}
\begin{remark}[Rule-based lower tails]\label{rem:sa-rulebased}
The price-reduction test in footnote~\ref{fn:price-reduction} requires, for any $v'=(v'_i,v_{-i})$ with $F(v')=F(v)$ and common winner $w\ne i$, that $z^P(v')=z^P(v)$ whenever, for each $\bar v_i\in\{v_i,v'_i\}$, there is an opponents' profile $\bar v_{-i}\in V^{n-1}$ such that
\[
w(\bar v_i,\bar v_{-i})=w
\qquad\text{and}\qquad
t_w(\bar v_i,\bar v_{-i})<t_w(v).
\]
For numerical second pricing this test is equivalent to Definition~\ref{def:privacy}: the price $\ell(v)$ is at least every losing value, so a loser's value can be kept while the same winner pays less exactly when it lies strictly below $\ell(v)$.
\end{remark}

\section{Implementations with a Common Terminal Record}\label{sa:record}

The Conclusion observes that a common announcement naming the price-setter changes what each bidder can check against his own value. This section makes the statement precise for numerical second pricing. The extension adds to a private query protocol a \emph{terminal record}: after the last answer, the seller announces one message identically to all bidders. A bidder's observation consists of his own query--answer sequence, the record, and his own allocation and payment. A run with a record is a deterministic procedure that chooses each private query and its recipient from earlier queries and answers, then announces one record to all bidders and assigns a feasible allocation with zero loser payments; safety uses the observation-matching condition \eqref{eq:safe}, with observations including the record. The restriction is that the record is common: the seller cannot send different bidders different records. A protocol with a record is \emph{credible under every prior} if, for every full-support prior, no safe run has strictly higher expected revenue. Since full-support priors can concentrate on any profile, this holds if and only if no safe run earns more than the protocol at any profile; it implies prior-free credibility.

Fix the descending implementation $P_m$ of Proposition~\ref{prop:grids}: the threshold decreases from the largest grid value, bidders are asked in priority order at each value whether their value is at least the current value, the first yes identifies the winner $w$, who is then removed, and the second yes, at value $\ell(v)$, identifies the highest-priority price-setter $d$, after which the protocol stops and charges $\ell(v)$; questions at the lowest value are retained so that both yes answers always occur. Let $P_m^{\circ}$ be $P_m$ followed by the record $(d,\ell(v))$.

\begin{proposition}[A named price-setter]\label{prop:sa-record}
$P_m^{\circ}$ implements numerical second pricing and satisfies lower-tail contextual privacy. Every safe run with a record earns at most $\ell(v)$ at every profile $v$. Hence $P_m^{\circ}$ is credible under every full-support prior, and in particular prior-free credible.
\end{proposition}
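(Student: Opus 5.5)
The plan has three parts: inherit implementation and privacy from $P_m$, bound the revenue of any safe run with a record using the common record, and then deduce credibility. The key idea is that a common record lets two different bidders cross-check one number: the winner's explanation ties his payment to the announced price, and the named price-setter's explanation ties that same price to his own value.

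\emph{Implementation and privacy.} These follow directly from Proposition~\ref{prop:grids}(ii). The record $(d,\ell(v))$ is computed from the terminal node of $P_m$, so $P_m^{\circ}$ still implements second pricing. For lower-tail privacy, fix a low loser $i$ and an alternative $v'_i$ with $\max\{v_i,v'_i\}<\ell(v)$. Every query in $P_m$ at $v$ has threshold at least $\ell(v)$, so $i$'s answers are unchanged and the path of $P_m$ is unchanged. Hence the winner $w$, the highest-priority price-setter $d$ and the price $\ell(v)$ are unchanged. The record and the full transcript $z^{P_m^{\circ}}$ therefore coincide.

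\emph{Revenue bound.} Fix a safe run $D$ with a record and a profile $v$, and let $(d',x)$ be the record $D$ announces at $v$. If $D$ withholds the object, revenue is $0\le\ell(v)$. Otherwise some bidder $k$ wins and pays $y$. I would then carry out three steps.

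First, by safety, $k$'s observation equals his honest observation at some $(v_k,\widehat v_{-k})$. That observation shows him winning with the record $(d',x)$. In every honest execution the recorded price equals the winner's payment, and the named bidder is a loser distinct from the winner. Hence $y=x$ and $d'\ne k$. Honest payments never exceed the winner's value, so $v_k\ge x$.

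Second, apply safety to $d'$. His observation contains the record $(d',x)$ naming himself, so it equals an honest observation at some $(v_{d'},\widetilde v_{-d'})$ in which he is the highest-priority price-setter at price $x$. In $P_m$ the price-setter's final query asks whether his value is at least $x$, and he answers yes. Bidders answer truthfully even under deviations, and his actual query--answer sequence coincides with this honest one. Therefore $v_{d'}\ge x$.

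Third, combine the two steps. Bidders $k\ne d'$ both have values at least $x$, so the second-highest value satisfies $\ell(v)\ge x=y=\Rev(D,v)$.

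\emph{Credibility.} Since $\Rev(D,v)\le\ell(v)=\Rev(P_m^{\circ},v)$ at every profile, no safe run has higher expected revenue under any full-support prior. This gives credibility under every prior, and prior-free credibility follows.

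The main obstacle is the second step. I must check that in $P_m$ the named price-setter's own observation always contains an affirmative answer at exactly the recorded price. This uses the stopping rule: the second yes occurs at $\ell(v)$, with the lowest-value questions retained. I must also check that the named bidder is never the winner in an honest run, so that $d'$ and $k$ are genuinely two different bidders and the two value bounds concern distinct people. If the zero convention at the bottom grid value causes trouble, I will treat the case $x=L$ separately, where the bound $x\le\ell(v)$ is trivial.
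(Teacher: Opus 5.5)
Your proposal is correct and follows the paper's proof essentially step for step. Safety for the winner ties his payment to the recorded price and rules out his being named, safety for the named bidder gives $v_{d'}\ge x$, and two distinct bidders with values at least $x$ force $x\le\ell(v)$; the paper phrases this last step as a contradiction with $p'>\ell(v)$. Your zero-convention worry is unnecessary, because $P_m$ always asks at the lowest grid value and the named bidder's honest explanation already gives $v_{d'}=x$.
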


\begin{proof}
Implementation and privacy are as in Proposition~\ref{prop:grids}: every question has threshold at least $\ell(v)$, and the record depends only on the price-setter and the price, so changing a low loser's value below $\ell(v)$ changes neither the transcript, nor the record, nor the outcome.

Honest observations have three features. First, the winner's payment equals the price in the record, and the record never names the winner. Second, the named bidder was asked whether his value is at least the recorded price and answered yes, so his value is at least that price. Third, a bidder's own answers are truthful, so a winner's payment is at most his value.

Let $D$ be a safe run with a record, and let $v$ be a profile at which $D$ sells, to bidder $w'$ at payment $t'$, with record $(d',p')$. Safety for $w'$ requires $t'=p'$, $d'\ne w'$, and $p'\le v_{w'}$. Safety for $d'$ requires that $d'$ answered yes to the question at $p'$, so $v_{d'}\ge p'$. Suppose $p'>\ell(v)$. Every bidder other than $w(v)$ has value at most $\ell(v)$, so $v_{d'}\ge p'$ forces $d'=w(v)$, and $v_{w'}\ge p'$ forces $w'=w(v)$, contradicting $d'\ne w'$. Hence $p'\le\ell(v)$, and $D$ earns $p'\le\ell(v)$ at $v$. If $D$ does not sell at $v$, it earns $0$. The remaining claims follow.
\end{proof}

Separately delivered private messages do not provide this restriction: if the seller could tell $w'$ that $d'$ set the price and tell $d'$ nothing, the argument above fails, and Theorem~\ref{thm:main} applies.

\clearpage
{\small\setstretch{1.0}\setlength{\bibsep}{2pt}

}

\end{document}